\documentclass[11pt]{amsart}
\usepackage{amssymb}

%%%%%%%%%%%%%%%%%%%%%%%%%%%%%%%%%%%%%%%%%%%%%%%%%%%%%%%%%%%%%%%%%%%%%%%%%%%%%%%%%%%%%%%%%%%%%%%%%%%%%%%%%%%%%%%%%%%%%%%%%%%%%%%%%%%%%%%%%%%%%%%%%%%%%%%%%%%%%%%%%%%%%%%%%%%%%%%%%%%%%%%%%%%%%%%%%%%%%%%%%%%%%%%%%%%%%%%%%%%%%%%%%%%%
\usepackage{amsfonts}
\usepackage{xcolor}
\usepackage{graphicx}
\usepackage{float}
\usepackage{lmodern}
\usepackage{graphicx}
\usepackage{cite}
\usepackage[bottom]{footmisc}
\setcounter{MaxMatrixCols}{10}
%TCIDATA{OutputFilter=LATEX.DLL}
%TCIDATA{Version=5.50.0.2890}
%TCIDATA{<META NAME="SaveForMode" CONTENT="1">}
%TCIDATA{BibliographyScheme=Manual}
%TCIDATA{Created=Friday, April 04, 2008 20:46:43}
%TCIDATA{LastRevised=Monday, February 06, 2017 18:19:18}
%TCIDATA{<META NAME="GraphicsSave" CONTENT="32">}
%TCIDATA{<META NAME="DocumentShell" CONTENT="Articles\SW\AMS Proceedings Article">}
%TCIDATA{Language=American English}
%TCIDATA{CSTFile=amsart.cst}

\theoremstyle{plain}

\newtheorem{corollary}{Corollary}[subsection]

\newtheorem{definition}{Definition}[subsection]

\newtheorem{lemma}{Lemma}

\newtheorem{proposition}{Proposition}[subsection]
\newtheorem{remark}{Remark}[section]

\newtheorem{theorem}{Theorem}[subsection]

%\numberwithin{proposition}{section}
%\numberwithin{theorem}{section}
\numberwithin{equation}{section}

\linespread{1.10}
\setlength{\topmargin}{-0.5in}
\setlength{\textheight}{9.7in}
\setlength{\textwidth}{7in}
\setlength{\oddsidemargin}{-0.25in}
\setlength{\evensidemargin}{-0.25in}

\begin{document}
\title{Nonlinear coherent states associated with a measure on the positive real half line}
\maketitle
\begin{center}
\author{S. Twareque Ali$^{*}$\footnote{
This paper is a development of project that Z. Mouayn during his visit to
Concordia university on September 2014 has started with Professor S.
Twareque Ali. Later, on January 2016, Professor S. Twareque Ali passed away.
This work is dedicated to his memory.}, Zouha\"{\i}r Mouayn$^{\flat}$ \ and
\ Khalid Ahbli$^{\Upsilon}$ }

\begin{scriptsize}
$^{*}$ Department of Mathematics and Statistics, Concordia University,\vspace{-0.2em}\\
Montr\'eal, Canada \\
${}^{\flat}$ Department of Mathematics, Faculty of
Sciences and Technics (M'Ghila),\vspace{-0.2em}\\ P.O. Box. 523, B\'{e}ni Mellal, Morocco.\\
$^{\Upsilon}$ Department of Mathematics, Faculty of
Sciences, Ibn Zohr University,\vspace{-0.2em}\\ P.O. Box. 8106, Agadir, Morocco.\vspace*{0.2mm}\vspace*{0.2mm}
\end{scriptsize}
\end{center}
\begin{abstract}
We construct a class of generalized nonlinear coherent states by means of a newly obtained class of 2D complex orthogonal polynomials. The associated coherent states transform is discussed. A polynomials realization of the basis of the quantum states Hilbert space is also obtained. Here, the entire structure owes its existence to a certain measure on the positive real half line, of finite total mass, together with all its moments. We illustrate this construction with the example of the measure $r^\beta e^{-r}dr$, which leads to a new generalization of the true-polyanalytic Bargmann transform.\\
\end{abstract}

\begin{scriptsize}
KEYWORDS: Nonlinear coherent states; 2D complex orthogonal polynomials;  Bargmann-type transform, positive measure on $\mathbb{R}_+$.
\\ 
AMS CLASSIFICATION: 33C45; 81R30; 35A22; 
\end{scriptsize}
\section{Introduction}
Nonlinear coherent states (NLCS) were first introduced explicitly in \cite%
{Matos} and \cite{Manko}, although having appeared implicitly in \cite%
{Shanta} in a compact form. These states have attracted much attention in
recent decades, mostly because they exhibit nonclassical properties. As
mentioned in \cite{RT}, up to now many quantum optical states such as $q$%
-deformed coherent states \cite{Manko}, photon added coherent states \cite%
{Siv1999,Siv2000,Naderi}, the center of mass motion of a
trapped ion \cite{Matos}, some nonlinear phenomena such as a hypothetical
"frequency blue shift" in high intensity photon beams \cite{Monko1995} and
the binomial state \cite{RT2} have been considered as some kind of NLCS.

Following \cite{RT3} there are different but equivalent ways to introduce
NLCS. We shall adopt the following one. We recall that the \textit{canonical} coherent states \cite{AAG} are written in terms of the so-called Fock basis $\{\varphi_n\}_{n=0}^{\infty}$(or number states): 
\begin{equation}
\vartheta_z =\left(e^{z\overline{z}}\right)^{-\frac{1}{2}}\sum\limits_{n=0}^{+\infty }\frac{\overline{z}^{n}}{\sqrt{n!}}
\varphi_n ,  \label{CCS}
\end{equation}
for each fixed $z\in \mathbb{C}$ where $e^{z\overline{z}}$ is chosen so that to ensure the normalization condition $\langle \vartheta_z|\vartheta_z\rangle =1$. The basis vectors $\{\varphi_n\} $ are orthonormal in the underlying quantum states Hilbert space $\mathcal{H}$, often
termed a Fock space.

 In this paper, we construct a class of generalized nonlinear coherent states (GNLCS) by replacing the monomials $\bar{z}^n$ in $(\ref{CCS})$ by a newly obtained class of 2D complex orthogonal polynomials. The new polynomials, here denoted $P_{n,m}(z,\bar{z},\beta ),\; n,m\geq 0$, were constructed in \cite{MZ} starting from a positive measure $d\mu_\beta$ on the positive real half line with finite total mass. Here, we will fix $m\in\mathbb{Z}_+$ and from the moments of this measure we build a sequence of real numbers $(x_{n,m}^\beta)_{n\geq 0}$ which will be used to define a generalized factorial $x_{n,m}^\beta !$. The later one replaces the factorial $n!$ in the denominator of $(\ref{CCS})$ and enables us to introduce a new class of GNLCS which we denote by $\mu_\beta$-GNLCS for brevity. We discuss the resolution of the identity satisfied by these coherent states and we give a polynomial realization of the basis $\{\varphi_n \}$ by using the shift operators method \cite{AI} which again is based on the sequence $(x_{n,0}^\beta )_{n\geq 0}$. We also write down the associated coherent states transform. It turns out that the range of this transform can be realized as an $L^2$-eigenspace of a Hamiltonian-type operator which is the anti-commutator of two operators defined by their actions on the above polynomials using the sequence $x_{n,m}^\beta $. Here, the remarkable fact is that the entire structure,
consisting of the biorthogonal polynomials, $\mu_\beta$-GNLCS and their corresponding coherent states transform, is completely determined by the choice of a single
measure on the positive real half line. We illustrate this construction with the example of the measure $d\mu_{\beta}(r)=r^\beta e^{-r}dr$, which leads to a new generalization of the true-polyanalytic Bargmann transform \cite{MouaynMN,Abreusampling,abfei14}.\\

The paper is organized as follows. In section 2, we briefly recall the formalism of NLCS and we introduce their construction starting from a positive measure on the real half line. The associated Hamiltonian operator is also discussed and the method for a polynomials realization of the basis $\{\varphi_n \}$ is summarized. In section 3, we review the construction of a general class of 2D orthogonal polynomials. These polynomials are then used to define our $\mu_\beta$-GNLCS whose corresponding coherent states transform together with its range are also discussed. Section 4 deals with an example of the measure $d\mu_{\beta}$ which illustrate our method.
\section{Nonlinear coherent states and their Hamiltonian}

\subsection{Nonlinear coherent states.} The so-called \textit{deformed coherent states} \cite{AAG,Manko} also known as NLCS in the quantum optical literature \cite{VW} are then defined by replacing the factorial $n!$ in the denominator following
the summation sign in $\left( \ref{CCS}\right) $ by $x_{n}!:=x_{1}x_{2}...x_{n},\, x_{0}=0,$ where $\{x_{n}\}_{n=0}^{\infty }$ is an infinite sequence of
positive numbers and, by convention, $x_{0}!=1$. Thus, for each $z\in 
\mathcal{D}$ some complex domain, one defines a generalized version of $\left(
\ref{CCS}\right) $ as 
\begin{equation}
\vartheta_z =(\mathcal{N}(z\bar{z}))^{-1/2}\sum\limits_{n=0}^{+\infty }%
\frac{\bar{z}^{n}}{\sqrt{x_{n}!}}\varphi_n ,\quad z\in \mathcal{D}  \label{NLCS x_m}
\end{equation}%
where again 
\begin{equation}
\mathcal{N}(z\bar{z})=\sum\limits_{n=0}^{+\infty }\frac{\left( z%
\overline{z}\right) ^{n}}{x_{n}!}  \label{normalization}
\end{equation}%
is an appropriate normalizing constant.  It is clear that the vectors $\vartheta_z$ are well defined for all $z$ for which the sum $(\ref{normalization})$
converges, i.e. $\mathcal{D}=\{z\in \mathbb{C},|z|<R\}$ where $R^{2}=\lim_{n\rightarrow +\infty }x_{n}$, with $R>0$ could be
finite or infinite, but not zero. As usual, we
require that there exists a measure $d\eta $ on $\mathcal{D}$ for which the
resolution of the identity condition, 
\begin{equation}
\int_{\mathcal{D}}|\vartheta_z\rangle \langle \vartheta_z|\mathcal{N}(z\bar{z})d\eta (z,%
\bar{z})=\textbf{1}_{\mathcal{H}} \label{reso-ident}
\end{equation}%
holds. Here, $|\vartheta_z\rangle \langle \vartheta_z|\equiv T_z$ means the rank one operator $T_z: \mathcal{H}\longrightarrow \mathcal{H}$ defined by $T_z[\psi]=\langle \vartheta_z | \psi \rangle \vartheta_z,\ \psi\in \mathcal{H}$. In order for $(\ref{reso-ident})$ to be satisfied, the
measure $d\eta $ has to have the form 
\begin{equation}
d\eta (z,\bar{z})=\frac{d\theta }{2\pi }d\lambda (r ),\quad z=r
e^{i\theta }  
\end{equation}%
where the measure $d\lambda $ is a solution of the moment problem 
\begin{equation}
\int_{0}^{R}r ^{2n}d\lambda (r )=x_{n}!,\quad n=0,1,2,...,
\label{Prob-moment}
\end{equation}%
provided that such a solution exists. In most of the cases that occur in
practice, the support of the measure $d\eta $ is the whole domain $\mathcal{D%
}$, meaning that $d\lambda $ is supported on $(0,R)$.

\subsection{NLCS associated with a measure}
We start from a family of measures of the form
\begin{equation}
d\mu_\beta (r):=r^{\beta}d\mu(r)\label{measure}
\end{equation}
where $d\mu(r)$ is a positive measure, does not depend on $\beta $, supported by $(0,L)$, where $L$ could be infinite. Assume that this measure have finite moments for all order and denote $\mathcal{D}_L=\{\xi\in\mathbb{C}, |\xi|<L\}$. Set
\begin{equation}
\mu _{\beta }:=\int_{0}^{L}d\mu_\beta \left( r \right) ,\text{ }\beta \geq
0. \label{measure mu_beta}
\end{equation}
From these moments we consider the following sequence of numbers 
\begin{equation}
x_{n}^{\beta }=\frac{\mu _{n+\beta }}{\mu _{n+\beta -1}}, \text{ \ \ \ }%
x_{n}^{\beta }!:=x_{n}^{\beta }x_{n-1}^{\beta }\ldots x_{1}^{\beta }=\frac{%
\mu _{n+\beta }}{\mu _{\beta }},\text{ \ \ }x_{0}^{\beta }!\equiv 1\text{,}\quad n=1,2,3,..., \label{sequ_x_n_b}
\end{equation}
which allows us to define the following nonlinear coherent states. 
\begin{definition}
For $\beta \geq 0$, the vectors $\vartheta_{z,\beta}\equiv |z;\beta \rangle\in \mathcal{H}$ are defined through the superposition
\begin{equation}
\vartheta_{z,\beta}:=\left( \mathcal{N}_{\beta}(z\overline{z} )\mu _{\beta }\right)^{-\frac{1}{2}}\ \sum_{n=0}^{\infty
}\ \frac{\overline{z}^{n}}{\sqrt{x_n^{\beta}!}}\varphi_n\text{,} \label{NLCS_x_m,beta}
\end{equation}
for each $z$ in $\mathcal{D}_{\beta}=\{\xi\in \mathbb{C},|\xi |<R_\beta=(\lim_{n\rightarrow +\infty }x_{n}^{\beta})^{\frac{1}{2}} \}$. For brevity, these states will be denoted $\mu_\beta -$NLCS.
\end{definition}
The vectors $\{\varphi_n\}_{n=0}^{\infty }$ form an orthonormal basis in the Hilbert space $\mathcal{H}$ and the normalization constant is given by
\begin{equation}
\mathcal{N}_{\beta}(z\bar{z})=\sum\limits_{n=0}^{+\infty }\frac{\left( z%
\overline{z}\right) ^{n}}{x_{n}^{\beta}!} \label{Norm_x_m,beta}
\end{equation}
which converges for each $z\in \mathcal{D}_{\beta}$. 
\begin{proposition}
Let $d\mu_\beta$ be a measure given in $(\ref{measure})$ and assuming that $\mathcal{D}_L \subseteq \mathcal{D}_{\beta} $, then the $\mu_\beta $-NLCS in $(\ref{NLCS_x_m,beta})$ satisfy the resolution of the identity operator of $\mathcal{H}$ as
\begin{equation}
\int_{\mathcal{D}_L}|\vartheta_{z,\beta }\rangle \langle
\vartheta_{z,\beta } |\ d\eta_\beta (z,%
\bar{z})=\mathbf{1}_{
\mathcal{H}}
\end{equation}
where $d\eta_\beta (z,%
\bar{z})=(2\pi)^{-1}d\theta  \mathcal{N}_{\beta}(z\overline{z} )d\mu_\beta (z\bar{z} )$, $z\in\mathcal{D}_L$.
\end{proposition}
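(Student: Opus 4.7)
The plan is to expand the rank-one projector $|\vartheta_{z,\beta}\rangle\langle \vartheta_{z,\beta}|$ in the Fock basis using $(\ref{NLCS_x_m,beta})$ and then integrate term-by-term against $d\eta_\beta$. By construction $d\eta_\beta$ carries exactly one factor of $\mathcal{N}_\beta(z\bar z)$, which cancels the product of the two half-powers $(\mathcal{N}_\beta(z\bar z)\mu_\beta)^{-1/2}$ coming from the two NLCS factors; what remains is a double sum over indices $(n,k)$ of operators $|\varphi_n\rangle\langle \varphi_k|$ weighted by $\bar z^n z^k/\sqrt{x_n^\beta!\,x_k^\beta!}$ and integrated against $\mu_\beta^{-1}\,d\mu_\beta(z\bar z)\,d\theta/(2\pi)$.

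Next I pass to polar coordinates $z=\rho e^{i\theta}$. The angular integral $\int_0^{2\pi}e^{i(k-n)\theta}\,d\theta/(2\pi)=\delta_{n,k}$ kills all off-diagonal terms, reducing everything to radial moments. Interpreting $d\mu_\beta(z\bar z)$ as the measure $d\mu_\beta$ in the variable $s=z\bar z=\rho^2$, the surviving radial integral is
\begin{equation*}
\int_0^{L} s^n\, d\mu_\beta(s)=\int_0^{L} s^{n+\beta}\, d\mu(s)=\mu_{n+\beta},
\end{equation*}
by $(\ref{measure})$ and $(\ref{measure mu_beta})$. Using the identification $x_n^\beta!\,\mu_\beta=\mu_{n+\beta}$ read off from $(\ref{sequ_x_n_b})$, the diagonal weight collapses to $1$, so each surviving term contributes exactly $|\varphi_n\rangle\langle \varphi_n|$. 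Summing then gives $\sum_{n\geq 0}|\varphi_n\rangle\langle \varphi_n|=\mathbf{1}_{\mathcal{H}}$ by completeness of the orthonormal basis $\{\varphi_n\}$.

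The main technical point is justifying the interchange of the double series with the integral. I would handle this by testing against basis pairs $(\varphi_j,\varphi_\ell)$: the matrix element $\langle \varphi_j|\vartheta_{z,\beta}\rangle\langle \vartheta_{z,\beta}|\varphi_\ell\rangle$ is an explicit product $\bar z^j z^\ell/\sqrt{x_j^\beta!\,x_\ell^\beta!}$ divided by $\mathcal{N}_\beta(z\bar z)\mu_\beta$, which after multiplication by $d\eta_\beta$ becomes a genuinely integrable single term $\bar z^j z^\ell/(\sqrt{x_j^\beta!\,x_\ell^\beta!}\,\mu_\beta)\, d\mu_\beta(z\bar z)\, d\theta/(2\pi)$. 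Thus the weak-operator identity is reduced to the scalar computation above, and the assumption $\mathcal{D}_L\subseteq\mathcal{D}_\beta$ is used precisely to ensure that $\mathcal{N}_\beta(z\bar z)$ is finite and strictly positive throughout the integration domain, so that the cancellation of the normalization factor is legitimate.
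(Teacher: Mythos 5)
Your proposal is correct and follows essentially the same route as the paper: expand the projector in the Fock basis, cancel the normalization factor against the $\mathcal{N}_\beta$ in $d\eta_\beta$, use the angular integral to produce $\delta_{n,k}$, and identify the surviving radial integral with the moment $\mu_{n+\beta}=x_n^\beta!\,\mu_\beta$ so that the diagonal weights collapse to $1$. The only difference is presentational — the paper phrases the argument as solving for an unknown density $h$ and you verify the given measure directly, additionally supplying the weak-operator justification for interchanging sum and integral, which the paper leaves implicit.
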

\begin{proof}
Let us assume that the measure takes the form $d\eta_\beta (z,%
\bar{z})=\mathcal{N}_{\beta}(z\bar{z})h(z\bar{z})d\nu(z),$
where $h$ is an auxiliary density function to be determined and $d\nu(z)$ is the Lebesgue measure on $\mathbb{C}$. In terms of
polar coordinates $z=\rho e^{i\theta} ,\ \rho>0$ and $\theta\in [0,2\pi)$, the measure can be rewritten as $d\eta_\beta (z,\bar{z})=\pi^{-1} \mathcal{N}_{\beta}(\rho^2 )h(\rho^2)\rho d\rho d\theta.$ Using the expression $(\ref{NLCS_x_m,beta})$ of coherent states, the operator $\mathcal{O}_{\beta}=\int_{\mathcal{D}_L}\left| \vartheta_{z,\beta }
\right\rangle \left\langle\vartheta_{z,\beta }\right| d\eta_\beta (z,%
\bar{z})$ reads successively,
\begin{eqnarray}
\mathcal{O}_{\beta}&=&(\mu_\beta)^{-1}\sum\limits_{n,m=0}^{+\infty} \frac{1}{\sqrt{x_n^\beta !}\sqrt{x_m^\beta !}}\left( 2 \int_0^{L^2}\rho^{n+m}h(\rho^2)\rho d\rho \int_0^{2\pi} e^{i(n-m)\theta}\frac{d\theta}{2\pi} \right)  \vert \varphi_m\rangle \langle \varphi_n\vert\\
&=&(\mu_\beta)^{-1}\sum\limits_{n=0}^{+\infty} \frac{1}{x_n^\beta !}\left(  2\int_0^{L^2}\rho^{2n}h(\rho^2)\rho d\rho \right)  \vert \varphi_n\rangle \langle \varphi_n\vert .
\end{eqnarray}
By a change of variables, we get
\begin{eqnarray}
\mathcal{O}_{\beta}&=&\sum\limits_{n=0}^{+\infty}\frac{1}{\mu_{n+\beta}}\left(  \int_0^{L}r^{n}h(r)dr \right)  \vert \varphi_n\rangle \langle \varphi_n\vert.\label{int_rho}
\end{eqnarray}
Thus, in order to recover the discrete resolution of the identity of $\mathcal{H}$, we need to find $r\mapsto h(r)$ such that
\begin{eqnarray}
\int_0^{L}r^{n}h(r)dr =\mu_{n+\beta}.
\end{eqnarray}
Equations $(\ref{measure mu_beta})-(\ref{sequ_x_n_b})$ suggest us to choose $h(r)=d\mu_{\beta}(r)/dr$ i.e., the Radon-Nikodym derivative of $d\mu_{\beta}$ with respect to $dr$.
Then, equation $(\ref{int_rho})$ reduces to $\mathcal{O}_{\beta}=\sum_{n=0}^{+\infty}\left|\varphi_n\right\rangle\left\langle \varphi_n\right|=\textbf{1}_{\mathcal{H}},$ since $\{|\varphi_n\rangle\}$ is an orthonormal basis of $\mathcal{H}$.
\end{proof}
\subsection{The Hamiltonian of the NLCS }
Man'ko \textit{et al}'s \cite{Manko} approach is based on the two following postulates: $(i)$ The standard annihilation and creation operators are deformed with an intensity dependent function $f(\widehat{n})$ (which is an operator valued function and $f$ can be chosen real and non-negative, i.e. $f^\dag (\widehat{n})=f(\widehat{n})$) as
\begin{eqnarray}
A&=&af(\widehat{n})\label{A}\\
A^\dag &=&f^\dag(\widehat{n})a^\dag\label{A^dag}\\
\left[A,A^\dag \right]&=&(\widehat{n}+1)f(\widehat{n}+1)f^\dag(\widehat{n}+1)-\widehat{n}f^\dag(\widehat{n})f(\widehat{n})
\end{eqnarray}
where $a$, $a^\dag$ and $\widehat{n}=a^\dag a$ are bosonic annihilation, creation and number operators, respectively. $(ii)$ The Hamiltonian of the deformed oscillator in analogy with the harmonic oscillator is found to be 
\begin{equation}
\widehat{H}=\frac{1}{2}\left(AA^\dag +A^\dag A\right)\label{Manko oper}
\end{equation}
and the Hamiltonian of the NLCS is defined as the \textit{normal ordered} or Wick operator of $\widehat{H}$ as
\begin{equation}
H_{\text{NLCS}}=:\widehat{H}:=A^\dag A
\end{equation}
which by equations $(\ref{A})-(\ref{A^dag})$ reads
\begin{equation}
H_{\text{NLCS}}=\widehat{n}[f(\widehat{n})]^2.\label{NLCS oper}
\end{equation}
So that the single mode NLCS obtained as an eigenstate of the annihilation operator can be expressed as
\begin{equation}
\vartheta_{z,f}=\left(\mathcal{N}_f\left(z\overline{z}\right)\right)^{-1/2}\sum\limits_{n=0}^{\infty}\frac{z^n}{\sqrt{x_n!}} \varphi_n\label{NL}
\end{equation}
in terms of the function $f$ by setting $x_n:=nf^\dag (n)f(n)$. So that $|z|\leqslant \lim\limits_{n\rightarrow +\infty}n[f(n)]^2$ ensures that $\mathcal{N}_f\left(z\overline{z}\right)<+\infty$. In our setting, the sequence $x_n$ is taken as $x_n^{\beta}$ therefore the connection between $(\ref{NLCS_x_m,beta})$ and $(\ref{NL})$ is given by
\begin{equation}
x_n^{\beta}=n[f(n)]^2.
\end{equation}
In view of $(\ref{NLCS oper})$, one can see that the Hamiltonian of the $\mu_\beta -$NLCS can be defined through the given sequence $x_n^{\beta}$ as
\begin{equation}
H_{\mu_\beta}:=x_{\widehat{n}}^{\beta}.
\end{equation}
\subsection{A polynomials realization of the basis $\{\varphi_n\}_{n=0}^\infty$}
The orthonormal basis vectors $\{\varphi_n\}_{n=0}^\infty$, that have been used to define the coherent states (\ref{NLCS_x_m,beta}), were taken from some abstract Hilbert space $\mathcal{H}$. On this Hilbert space we may define the operators $a_\beta $ and $a_{\beta}^\dag$ by
\begin{equation}
  a_\beta\varphi_n = \sqrt{x^\beta_{n}}\varphi_{n-1},\;\; a_\beta\varphi_0 =0, \qquad
  a_{\beta}^{\dag}\varphi_n = \sqrt{x^\beta_{n+1}}\varphi_{n+1}.
\label{abst-ops}
\end{equation}
Using these operators, it is now possible to identify the basis vectors $\{\varphi_n\}_{n=0}^\infty$  with another family of real orthogonal polynomials by following \cite{AI}. In order to do this, we make the assumption that the sequence of real numbers, $x_n^{\beta}, \;\; n =1,2, \ldots $, defined in (\ref{sequ_x_n_b}) satisfy the condition,
\begin{equation}
  \sum_{n=1}^\infty \frac 1{\sqrt{x^\beta_n}} = \infty.
\label{div-cond}
\end{equation}
We define the operators,
\begin{equation}
  Q_\beta = \frac 1{\sqrt{2}}\; [a_\beta + a_{\beta}^{\dagger} ]\; , \qquad
  P_\beta = \frac 1{i\sqrt{2}}\; [a_\beta - a_{\beta}^{\dagger} ]\; ,
\label{eq:pos-mom-op}
\end{equation}
which are analogues of the standard position and momentum operators. The operator $Q_\beta$ acts on the basis vectors $\varphi_n$ as
\begin{equation}
  Q_\beta \varphi_n = \sqrt{\frac{x^\beta_n}2}\; \varphi_{n-1} + \sqrt{\frac{x^\beta_{n+1}}2}\; \varphi_{n+1}\; .
\label{eq:pos-op-act}
\end{equation}
If now the sum $\sum_{n=0}^\infty \dfrac 1{\sqrt{x^\beta_n }}$ diverges, the operator $Q_\beta$ is essentially self-adjoint and hence has a unique self-adjoint extension, which we again denote by $Q_\beta$. Let $E^\beta_x , \; x\in \mathbb R$, be the spectral family of $Q_\beta$, so that,
$$ Q_\beta = \int_{-\infty}^\infty x \; dE^\beta_x \; .$$
Thus there is a measure $d\omega_{\beta}(x)$ on $\mathbb R$ such that on the Hilbert space $L^2 (\mathbb R , d\omega_{\beta}(x))$,
$Q^\beta$ is just the operator of multiplication by $x$. Consequently, on this space, the relation (\ref{eq:pos-op-act})
takes the form
\begin{equation}
  x\varphi_n = \sqrt{\frac{x^\beta_n}2} \varphi_{n-1} + \sqrt{\frac{x^\beta_{n+1}}2} \varphi_{n+1}\; 
\label{eq:pos-op-act2}
\end{equation}
which is a two-term recursion relation, familiar from the theory of orthogonal polynomials. It follows that
$d\omega_{\beta}(x) = d\langle \varphi_0\vert E^\beta_x\vert \varphi_0\rangle $, and the $\varphi_n$ may be realized as the polynomials obtained
by orthonormalizing the sequence of monomials $1, x, x^2 , x^3 , \ldots\; , $ with respect to this measure
(using a Gram-Schmidt procedure). Let us use the notation $p^\beta_n (x)$ to write the vectors $\varphi_n$, when they are
so realized, as orthogonal polynomials in $L^2 (\mathbb R , d\omega_{\beta}(x))$. Then, for any $dw_\beta$-measurable set
$\Delta \subset \mathbb R$,
\begin{equation}
  \langle \varphi_k  ,E(\Delta) \varphi_n\rangle = \int_{\Delta}  p^\beta_k (x)p^\beta_n (x)\;d\omega_{\beta}(x) ,
\label{eq:poly-basis}
\end{equation}
and
\begin{equation}
    \langle \varphi_k , \varphi_n\rangle = \int_{\mathbb R}  p^\beta_k (x)p^\beta_n (x)\; d\omega_{\beta}(x)= \delta_{k,n}\; .
\label{eq:poly-orthog}
\end{equation}
We close this subsection by illustrating this formalism for the case of the sequence $x_n^\beta :=n+\beta$ with $\beta \geq 0$. 
\begin{proposition}
A polynomials realization of the basis $\{\varphi_n\}_{n=0}^\infty$ is given by the associated Hermite polynomials
\begin{equation}
\varphi_n^{\beta}(x)=\frac{2^{-n/2}}{\sqrt{(\beta +1)_n}}H_n(x,\beta ) \label{H_m_beta and p_m_beta}
\end{equation}
whose orthogonality measure is 
\begin{equation}
d\omega_{\beta}(x)=\left(\sqrt{\pi}\Gamma(\beta +1)\right)^{-1}|D_{-\beta}(ix\sqrt{2})|^{-2}dx
\end{equation} where $D_{a}(\cdot)$ is the parabolic cylinder function.
\end{proposition}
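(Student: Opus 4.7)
The plan is to apply the general polynomials-realization scheme of Subsection~1.4 to the specific sequence $x_n^\beta=n+\beta$ and identify the resulting orthogonal polynomials with the associated Hermite polynomials whose weight is expressed through the parabolic cylinder function $D_{-\beta}$.

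First, I would verify the hypothesis on which the scheme rests. With $x_n^\beta=n+\beta$, the divergence condition (\ref{div-cond}) becomes $\sum_{n\geq 1}(n+\beta)^{-1/2}=\infty$, so $Q_\beta$ is essentially self-adjoint and the realization of $\{\varphi_n\}$ as polynomials in $L^2(\mathbb{R},d\omega_\beta)$ is legitimate. Plugging $x_n^\beta=n+\beta$ into the three-term recursion (\ref{eq:pos-op-act2}) gives
\begin{equation*}
x\,\varphi_n=\sqrt{\tfrac{n+\beta}{2}}\,\varphi_{n-1}+\sqrt{\tfrac{n+1+\beta}{2}}\,\varphi_{n+1}.
\end{equation*}

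Second, I would recall the standard three-term recurrence satisfied by the associated Hermite polynomials, namely
\begin{equation*}
xH_n(x,\beta)=\tfrac{1}{2}H_{n+1}(x,\beta)+(n+\beta)H_{n-1}(x,\beta),
\end{equation*}
and substitute $\varphi_n^\beta(x)=c_n H_n(x,\beta)$ with $c_n=2^{-n/2}/\sqrt{(\beta+1)_n}$. A direct computation gives $c_n/(2c_{n+1})=\sqrt{(n+1+\beta)/2}$ and $(n+\beta)c_n/c_{n-1}=\sqrt{(n+\beta)/2}$, which reproduces exactly the recurrence above. Hence $\varphi_n^\beta$ as defined in (\ref{H_m_beta and p_m_beta}) is the unique (up to sign) orthonormal polynomial sequence satisfying (\ref{eq:pos-op-act2}), so by the abstract construction it must coincide with the polynomial realization of $\varphi_n$.

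Third, I would identify the orthogonality measure. Since the measure is characterized by the same recurrence and the normalization $\int_\mathbb{R}d\omega_\beta=1$, it suffices to quote the known weight function for the associated Hermite polynomials of Askey--Wimp, which is precisely
\begin{equation*}
d\omega_\beta(x)=\frac{dx}{\sqrt{\pi}\,\Gamma(\beta+1)\,|D_{-\beta}(ix\sqrt{2})|^{2}},
\end{equation*}
with the prefactor fixed by the requirement $\int_\mathbb{R}|\varphi_0^\beta(x)|^2 d\omega_\beta(x)=1$, since $\varphi_0^\beta\equiv 1$.

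The principal obstacle is not algebraic — the recurrence matching is a one-line check — but rather bibliographic and analytic: one must invoke the correct normalization of the associated Hermite polynomials $H_n(x,\beta)$ and of their weight, making sure the $\Gamma(\beta+1)$ and $\sqrt{\pi}$ factors in the denominator are consistent with the chosen leading coefficient $2^n$ of $H_n(\cdot,\beta)$. Once this normalization is fixed, the proof reduces to the two recurrence computations and the invocation of the uniqueness of the orthogonality measure for a determinate Hamburger moment problem, which is guaranteed here by the same divergence condition (\ref{div-cond}).
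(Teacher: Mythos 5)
Your proposal is correct and follows essentially the same route as the paper: the paper's proof consists precisely of comparing the recursion $x\varphi_n=\sqrt{x_n^\beta/2}\,\varphi_{n-1}+\sqrt{x_{n+1}^\beta/2}\,\varphi_{n+1}$ with $x_n^\beta=n+\beta$ against the Askey--Wimp three-term recurrence $H_{n+1}(x,\beta)=2xH_n(x,\beta)-2(n+\beta)H_{n-1}(x,\beta)$ and reading off the weight from their known orthogonality relation. You merely spell out the details the paper leaves implicit (the coefficient check $c_n=2^{-n/2}/\sqrt{(\beta+1)_n}$, the divergence condition guaranteeing essential self-adjointness of $Q_\beta$, and the normalization of $d\omega_\beta$ via $\varphi_0^\beta\equiv 1$), all of which are consistent with the orthogonality relation stated in the paper's subsequent remark.
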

\begin{proof}
This can be proved by comparing $(\ref{eq:pos-op-act2})$, where the sequence $x_n^\beta$ is chosen to be $n+\beta$, with the three-terms recurrence relation in (\cite{AW}, p.16): 
\begin{eqnarray}
H_{n+1}(x,\beta)=2xH_n(x,\beta)-2(n+\beta)H_{n-1}(x,\beta), \qquad H_{-1}(x,\beta )=0,\ H_0(x,\beta)=1 \label{Recu_H_m_b}
\end{eqnarray}
where $H_n(x,\beta )$ are the associated Hermite polynomials.
\end{proof}
\begin{remark}
These polynomials were introduced and studied by Askey and Wimp in \cite{AW}. Their explicit orthogonality relation reads
\begin{equation}
\int_{\mathbb{R}}\frac{H_n(x,\beta)H_k(x,\beta)}{|D_{-\beta}(ix\sqrt{2})|^2}dx=2^n\sqrt{\pi}\Gamma(n+\beta +1)\delta_{n,k},\label{Orthog_H_m_beta}
\end{equation}
where $D_{a}$ is a parabolic cylinder function (\cite{Wunsche}, p199):
\begin{equation}
D_{\beta}(z)=\exp\left(-\frac{z^2}{4}\right)\frac{2^{\beta /2}\sqrt{\pi}}{\Gamma(-\beta)}\sum\limits_{k=0}^{+\infty}\frac{(-1)^k\Gamma(k-\beta)}{k!\Gamma\left(\frac{k-\beta+1}{2}\right)}\left(\frac{z}{\sqrt{2}}\right)^k.
\end{equation}
\end{remark}
\section{Generalized $\mu_{\beta}$-NLCS and their coherent states transforms}
In this section, we first review the construction of a general class of $2D$ orthogonal polynomials from \cite{MZ}. Then we introduce a generalized version of the NLCS $(\ref{NLCS_x_m,beta})$ with the use of these polynomials. We also discuss the associated coherent states transform and some related Hilbert spaces.
\subsection{A class of 2D orthogonal polynomials}
Let $d\mu_\beta $ be the measure given in \eqref{measure} and $\mu_\beta$ its moments with the normalization $\mu _{0}=1$. For each $\beta \geq 0$, let $\phi _{n}(r;\beta ),\;n=0,1,2,...$, be a
family of real polynomials, orthogonal with respect to the measure $d\mu
_\beta (r)$, that is 
\begin{equation}
\int_{0}^{L}\phi _{n}\left( r;\beta \right) \phi _{k}\left( r;\beta
\right) d\mu_\beta \left( r\right) =\zeta _{n}\left( \beta \right) \delta
_{k,n}\label{orth-rel of phi_n}  
\end{equation}%
where $\zeta _{n}(\beta )$ is a positive sequence. The polynomial $\phi _{n}(r;\beta )$ may also be written as 
\begin{equation}
\phi _{n}(r;\beta )=\sum_{j=0}^{n}c_{j}\left( n;\beta \right) r^{n-j},\quad \phi_0=1, \label{phi_n}
\end{equation}%
where the $c_{j}(n;\beta )$ are real coefficients.  Using the above real polynomials, the authors \cite{MZ} have constructed an orthogonal
family of polynomials, $P_{n,m}(z,\overline{z};\beta ),\; n,m=0,1,2,\ldots
$, in the variables $z,\overline{z}\in \mathbb{C}$, by 
\begin{eqnarray}
P_{n,m}(z,\overline{z};\beta )&=&z^{n-m}\phi _{m}(z\overline{z};n-m+\beta
),\quad n\geq m  \label{P_m,n}\\
&=&P_{m,n}(\overline{z},z;\beta ),\quad m\geq n. \label{P_n,m}
\end{eqnarray}
Note that $(\ref{P_m,n})$-$(\ref{P_n,m})$ can together be put into a single expression as
\begin{equation}
P_{n,m}(z,\overline{z};\beta )=z^{n-(n \wedge m)}\overline{z}^{m-(n \wedge m)}\phi _{n \wedge m}(z\overline{z};|n-m|+\beta
), \quad n,m\in \mathbb{Z}_+,\label{P_mn for all m,n}
\end{equation}
where $n\wedge m$ denotes the smaller of $n$ and $m$. Mourad and Zhang proved that these polynomials form
an orthogonal family, in the sense that 
\begin{equation}
\frac{1}{2\pi}\int_{0}^{L^{2}}\int_{0}^{2\pi }P_{n,m}(z,\overline{z};\beta )%
\overline{P_{k,s}(z,\overline{z};\beta )}d\theta d\mu_\beta ( r^{2}
) =\zeta _{n\wedge m}\left( \left\vert n-m\right\vert +\beta \right)\delta _{n,k}
\delta _{m,s}  \label{ortho_P_n,m}
\end{equation}%
where $\zeta _{n}(\beta) $ is as in $\left( \ref{orth-rel of phi_n}\right) $. We denote the inner product defined on the Hilbert space $L^{2,\beta}(\mathcal{D}_L):= L^2 (\mathcal{D}_L,  (2\pi)^{-1}d\theta d\mu_\beta(r^2))$ by 
\begin{equation}
\langle f,g\rangle_{L^{2,\beta}(\mathcal{D}_L)}=\frac{1}{2\pi}\int_{\mathcal{D}_L}f(z,\bar{z})\overline{g(z,\bar{z})}d\theta d\mu_\beta (z\bar{z}),\quad z=re^{i\theta}.
\end{equation}

We note that the biorthogonal polynomials $P_{n,m}(z,\overline{z};\beta ), \; n,m = 0,1,2, ... $, defined in (\ref{P_m,n}), span the Hilbert space
 $L^{2,\beta}(\mathcal{D}_L)$. Its holomorphic and antiholomorphic subspaces, $\mathcal{H}^\beta_{\text{hol}}(\mathcal{D}_L)$ and $\mathcal{H}^\beta_{\text{a-hol}}(\mathcal{D}_L)$, are spanned by the monomials $P_{n,0}(z,\overline{z},\beta)$, and $P_{0,n}(z,\overline{z},\beta), \;\; n=0,1,2, ... $, respectively. The above two monomial bases are in fact determinative of the entire set of polynomials
$P_{n,m}(z,\overline{z},\beta)$ since they determine the measure $d\mu_\beta$, through the moment relation
\begin{eqnarray}
\int_0^{L^2} \vert P_{n,0}(z,\overline{z},\beta)\vert^2\; d\mu_\beta (r^2) = \int_0^{L^2} r^{2n}\; d\mu_\beta(r^2) = \mu_{n+ \beta}\; ,
\end{eqnarray}
and hence of the entire Hilbert space $L^{2,\beta}(\mathcal{D}_L)$. 

\subsection{Generalized $\mu_{\beta}$-NLCS}
Keeping the same vectors basis $\varphi_n$ in the Hilbert space $\mathcal{H}$ as in the previous sections we can use polynomials $(\ref{P_m,n})$ to introduce a generalized version of the NLCS $(\ref{NLCS_x_m,beta})$ as follows. From now on, we will use the notation
\begin{equation}
P_{n,m}^{\beta}(z,\overline{z} ):=(\zeta _{0}(m+\beta ))^{-1/2}P_{n,m}(z,\overline{z}%
;\beta ),\;\; n,m\geq 0\label{P_m,n^beta}
\end{equation}
and we define the sequence $(x_{n,m}^{\beta})_{n\geq 0}$ by
\begin{equation}
x_{n,m}^{\beta}:=\frac{\zeta _{n\wedge m}(|n-m|+\beta )}{\zeta _{(n-1)\wedge m}(|n-m-1|+\beta )},
\end{equation}
with the generalized factorial
\begin{equation}
x_{n,m}^{\beta }!:=x_{n,m}^{\beta }x_{n-1,m}^{\beta }\ldots x_{1,m}^{\beta }=\frac{\zeta _{n\wedge m}(|n-m|+\beta )}{\zeta _{0}(m+\beta )},\text{ \ \ }x_{0,m}^{\beta }!\equiv 1\text{.}\quad m=1,2,3,...,\label{Gen_fact}
\end{equation}
and $\zeta_n(\alpha)$ is the coefficient in $(\ref{orth-rel of phi_n})$. By \eqref{P_m,n^beta} and \eqref{Gen_fact} the orthogonality relation $(\ref{ortho_P_n,m})$ takes the form
\begin{equation}
\frac{1}{2\pi}\int_{0}^{L^{2}}\int_{0}^{2\pi }P_{n,m}^\beta(z,\overline{z} )%
\overline{P_{k,s}^\beta(z,\overline{z})}d\theta d\mu_\beta ( r^{2}
) =x_{n,m}^{\beta }!\; \delta _{n,k}\delta _{m,s} . \label{ortho_P_n,m,beta}
\bigskip
\end{equation}
\begin{definition}
For fixed parameters $\beta \geq 0$ and $m\in\mathbb{Z}_+$, we define a set of generalized nonlinear
coherent states ($\mu_\beta$-GNLCS) as
\begin{equation}
\vartheta_{z,m,\beta}\equiv |z;\beta ,m\rangle :=\left( \mathcal{N}_{\beta,m}(z\overline{z})\right) ^{-\frac{1}{2}%
}\sum_{n=0}^{\infty }\frac{\overline{P_{n,m}^\beta(z,\overline{z})}}{\sqrt{%
 \;x_{n,m}^{\beta}!}}\varphi_n \label{GNLCS1}
\end{equation}
where $\mathcal{N}_{\beta,m}(z\overline{z})$ is a normalizing factor.
\end{definition}
We now give a condition on $z$ that ensures the finitness of $\mathcal{N}_{\beta,m}(z\overline{z})$. See Appendix A for the proof.
\begin{proposition}
The complex numbers $z$ for which the $\mu_\beta$-GNLCS $(\ref{GNLCS1})$ are defined belong to the disk $\mathcal{D}_{\beta ,m}=\{\xi\in\mathbb{C},\,\, |\xi|<R_{\beta ,m}\} $ with $R_{\beta ,m}:=\min\limits_{0\leqslant i,j \leqslant m} R_{\beta,m,i,j}$ where 
\begin{equation}
(R_{\beta, m,i,j})^2=\lim\limits_{n\rightarrow +\infty}\left|\frac{c_i(m;n-1+\beta)c_j(m;n-1+\beta) \zeta _{m}( n +\beta )}{c_i(m;n+\beta)c_j(m;n+\beta)\zeta _{m}( n -1+\beta )}\right| .\label{R_beta,n,i,j}
\end{equation}
In particular, for $m=0$, we have $\mathcal{D}_{\beta ,0}=\mathcal{D}_{\beta}$ is just the domain of convergence of $(\ref{Norm_x_m,beta})$.
\end{proposition}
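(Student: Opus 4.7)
The plan is to analyse $\mathcal{N}_{\beta,m}(z\bar z) = \sum_{n=0}^\infty |P_{n,m}^\beta(z,\bar z)|^2/x_{n,m}^\beta !$ directly by reducing it to a finite linear combination of ordinary power series in $\rho=|z|^2$, and then applying the ratio test term by term. The first observation is that the sum splits into the $n<m$ part, which has only $m$ terms and is a polynomial in $\rho$ posing no convergence issue, and the tail $n\geq m$, which is what fixes the domain $\mathcal{D}_{\beta,m}$.

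For the tail, I would use representation \eqref{P_m,n} to write $P_{n,m}(z,\bar z;\beta)=z^{n-m}\phi_m(z\bar z;n-m+\beta)$, and invoke the closed form $x_{n,m}^\beta!=\zeta_m(n-m+\beta)/\zeta_0(m+\beta)$ from \eqref{Gen_fact}. The factor $\zeta_0(m+\beta)^{-1}$ in \eqref{P_m,n^beta} cancels the matching factor in $x_{n,m}^\beta!$, so the tail becomes
\begin{equation*}
\sum_{n\geq m}\frac{\rho^{\,n-m}\,\bigl(\phi_m(\rho;n-m+\beta)\bigr)^{2}}{\zeta_m(n-m+\beta)}.
\end{equation*}
Substituting the expansion $\phi_m(\rho;\alpha)=\sum_{j=0}^{m}c_j(m;\alpha)\rho^{m-j}$ from \eqref{phi_n}, squaring, and setting $k=n-m$ rewrites this as
\begin{equation*}
\sum_{i,j=0}^{m}\rho^{\,2m-i-j}\;\sum_{k=0}^{\infty}\frac{c_i(m;k+\beta)\,c_j(m;k+\beta)}{\zeta_m(k+\beta)}\,\rho^{k}.
\end{equation*}
For each of the $(m+1)^2$ inner power series the ratio test, applied after substituting $n=k+1$, produces exactly the limit $(R_{\beta,m,i,j})^{2}$ of \eqref{R_beta,n,i,j}. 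The outer sum over $(i,j)$ is finite, hence the tail converges as soon as $\rho<R_{\beta,m,i,j}^{2}$ for every pair $(i,j)$, that is, whenever $|z|<\min_{i,j}R_{\beta,m,i,j}=R_{\beta,m}$; combined with the polynomial $n<m$ part, this gives $\mathcal{N}_{\beta,m}(z\bar z)<\infty$ on $\mathcal{D}_{\beta,m}$.

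The main delicate point is the interchange $\sum_{n}\sum_{i,j}\leftrightarrow\sum_{i,j}\sum_{n}$, because the coefficients $c_i(m;\alpha)c_j(m;\alpha)$ can carry signs even though the original summand $|P_{n,m}^\beta|^2/x_{n,m}^\beta!$ is nonnegative. I would handle this by noting that the limit in \eqref{R_beta,n,i,j} is defined with absolute values, so the radius of convergence of the signed series $\sum_k c_ic_j\,\rho^k/\zeta_m(k+\beta)$ coincides with that of its absolute-value companion $\sum_k|c_ic_j|\rho^k/\zeta_m(k+\beta)$; Fubini then legitimises the rearrangement on the open disk of radius $R_{\beta,m}$. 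Finally, the case $m=0$ is a direct verification: only $(i,j)=(0,0)$ contributes with $c_0(0;\alpha)\equiv 1$ and $\zeta_0(\alpha)=\mu_\alpha$, so $R_{\beta,0,0,0}^{2}=\lim_{n}\mu_{n+\beta}/\mu_{n-1+\beta}=\lim_{n}x_n^\beta=R_\beta^{2}$, reproducing $\mathcal{D}_{\beta,0}=\mathcal{D}_\beta$.
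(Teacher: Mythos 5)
Your argument is correct and follows essentially the same route as the paper's Appendix A: split off the finite $n<m$ part, expand $\phi_m$ via the coefficients $c_j$, interchange the sums to get $(m+1)^2$ power series in $|z|^2$, and apply the ratio test to each. Your explicit justification of the interchange via absolute convergence and your verification of the $m=0$ case are welcome refinements the paper leaves implicit, but they do not change the method.
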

For $m=0$, the polynomials $(\ref{P_m,n^beta})$ reduce to 
$P_{n,0}^{\beta}(z,\overline{z})=z^{n}/\sqrt{\mu_{\beta}},$
and $x_{n,0}^\beta !=\zeta_0(n+\beta)/\zeta_0(\beta)=\mu_{n+\beta}/\mu_{\beta}= x_n^\beta !\quad n=0,1,2,...$ . Consequently, the $\mu_\beta$-GNLCS $(\ref{GNLCS1})$ reduce to the $\mu_\beta$-NLCS in $(\ref{NLCS_x_m,beta})$.
\begin{proposition}
Assuming that $\mathcal{D}_L \subseteq \mathcal{D}_{\beta ,m} $, then the $\mu_\beta$-GNLCS in $(\ref{GNLCS1})$ satisfy the resolution of the identity operator of $\mathcal{H}$ as
\begin{equation}
\int_{\mathcal{D}_L}|\vartheta_{z,m,\beta}\rangle \langle \vartheta_{z,m,\beta} | d\eta_{\beta ,m} (z,\bar{z})=\mathbf{1}_{
\mathcal{H}}.\label{RI-Gener}
\end{equation}
where $d\eta_{\beta ,m} (z,\bar{z})=(2\pi)^{-1}d\theta \mathcal{N}_{\beta ,m}(z\overline{z} )d\mu_\beta (z\bar{z} )$.
\end{proposition}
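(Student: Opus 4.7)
The plan is to mimic the proof of the earlier Proposition for the $\mu_\beta$-NLCS, with the monomials $\bar z^n$ replaced by the polynomials $\overline{P_{n,m}^{\beta}(z,\bar z)}$. First I substitute the definition \eqref{GNLCS1} into the left-hand side of \eqref{RI-Gener}. The crucial observation is that the normalization $(\mathcal{N}_{\beta,m}(z\bar z))^{-1}$ coming from the two copies of $\vartheta_{z,m,\beta}$ cancels exactly against the factor $\mathcal{N}_{\beta,m}(z\bar z)$ that was built into the measure $d\eta_{\beta,m}$. This is precisely why $\mathcal{N}_{\beta,m}$ was inserted in the definition of $d\eta_{\beta,m}$.

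After this cancellation, I expand $|\vartheta_{z,m,\beta}\rangle\langle \vartheta_{z,m,\beta}|$ as a double series in the basis $\{\varphi_n\}$, obtaining
\begin{equation*}
\int_{\mathcal{D}_L}|\vartheta_{z,m,\beta}\rangle\langle \vartheta_{z,m,\beta}|\, d\eta_{\beta,m}(z,\bar z)
=\sum_{n,k=0}^{\infty}\frac{I_{n,k}^{\beta,m}}{\sqrt{x_{n,m}^{\beta}!\,x_{k,m}^{\beta}!}}\,|\varphi_n\rangle\langle \varphi_k|,
\end{equation*}
where
\begin{equation*}
I_{n,k}^{\beta,m}:=\frac{1}{2\pi}\int_{0}^{L^{2}}\!\!\int_{0}^{2\pi}\overline{P_{n,m}^{\beta}(z,\bar z)}\,P_{k,m}^{\beta}(z,\bar z)\,d\theta\,d\mu_\beta(r^{2}).
\end{equation*}
This is exactly the orthogonality pairing of \eqref{ortho_P_n,m,beta} with the two ``second'' indices both equal to $m$. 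Applying that identity (with $s=m$ fixed), $I_{n,k}^{\beta,m}=x_{n,m}^{\beta}!\,\delta_{n,k}$, so the double sum collapses to $\sum_{n\ge 0}|\varphi_n\rangle\langle \varphi_n|=\mathbf{1}_{\mathcal{H}}$, since $\{\varphi_n\}$ is an orthonormal basis of $\mathcal{H}$.

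The only technical obstacle is justifying the interchange of summation and integration that produces the above formula from the double series defining $|\vartheta_{z,m,\beta}\rangle\langle \vartheta_{z,m,\beta}|$. This is where the hypothesis $\mathcal{D}_L\subseteq \mathcal{D}_{\beta,m}$ enters: inside $\mathcal{D}_{\beta,m}$ the normalization series $\mathcal{N}_{\beta,m}(z\bar z)$ converges, so the operator-valued integrand is weakly integrable, and one may either appeal to Fubini on the angular variable first (which kills all but the diagonal in $n-k$ via $\int_0^{2\pi}e^{i(n-k)\theta}d\theta/2\pi=\delta_{n,k}$, exactly as in the proof of the previous Proposition) and then use monotone convergence for the resulting positive series in $r$, or verify matrix elements $\langle\psi|\mathcal{O}_{\beta,m}|\psi'\rangle$ for $\psi,\psi'$ running through the basis and conclude by density. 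Either route reduces the proof to a bookkeeping of the orthogonality relation \eqref{ortho_P_n,m,beta} and requires no new ingredient beyond it.
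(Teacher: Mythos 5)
Your proposal is correct and follows essentially the same route as the paper: expand the rank-one projector in the basis $\{\varphi_n\}$, cancel the normalization factor against the one built into $d\eta_{\beta,m}$, and collapse the double sum via the orthogonality relation \eqref{ortho_P_n,m,beta} with the second index fixed at $m$. The only cosmetic difference is that the paper postulates an unknown density $h$ and solves for it, whereas you verify the stated measure directly (and you add a justification of the sum--integral interchange that the paper leaves implicit).
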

\begin{proof}
Let us assume that the measure takes the form $d\eta_{\beta,m} (z,%
\bar{z})=\mathcal{N}_{\beta ,m}(z\bar{z})h(z\bar{z})d\nu(z),$
where $h$ is an auxiliary density function to be determined and $d\nu(z)$ is a Lebesgue measure on $\mathbb{C}$. In terms of
polar coordinates $z=\rho e^{i\theta} ,\ \rho>0$ and $\theta\in [0,2\pi)$, the measure can be rewritten as $d\eta_{\beta ,m} (z,%
\bar{z})=\pi^{-1} \mathcal{N}_{\beta ,m}(\rho^2 )h(\rho^2)\rho d\rho d\theta.$ Using the expression $(\ref{NLCS_x_m,beta})$ of coherent states, the operator
\begin{equation}
\mathcal{O}_{\beta ,m}=\int_{\mathcal{D}_L}\left| \vartheta_{z,m,\beta}\right\rangle \left\langle \vartheta_{z,m,\beta}\right| d\eta_{\beta ,m} (z,\bar{z})
\end{equation}
reads,
\begin{equation}
\mathcal{O}_{\beta,m}=\sum\limits_{s,n=0}^{+\infty} \frac{(\pi\zeta _{0}(m+\beta ))^{-1}}{\sqrt{x_{s,m}^\beta !}\sqrt{x_{n,m}^\beta !}}\left(  \int_0^{L^2}\int_0^{2\pi}P_{s,m}(z,\overline{z},\beta)\overline{P_{n,m}(z,\overline{z},\beta)} d\theta h(\rho^2)\rho d\rho \right)  \vert \varphi_n\rangle \langle \varphi_s\vert \label{O_beta,n}
\end{equation}
The orthogonality relation $(\ref{ortho_P_n,m})$  suggests us to choose the function $h(r)$ such that $h(r)=d\mu_{\beta}(r)/dr$ i.e., the density of $d\mu_{\beta}(r)$ with respect to $dr$. Since
\begin{equation}
(2\pi\zeta _{0}(m+\beta ))^{-1}\int_{0}^{L^{2}}\int_{0}^{2\pi }P_{s,m}(z,\overline{z};\beta )%
\overline{P_{n,m}(z,\overline{z};\beta )}d\theta d\mu_\beta ( \rho^{2}
) =x_{n,m}^\beta !\;
\delta _{s,n} 
\end{equation}
and $\{\varphi_n\}$ is an orthonormal basis of $\mathcal{H}$, equation $(\ref{O_beta,n})$ reduces to $\mathcal{O}_{\beta,m}=\sum_{n=0}^{+\infty}\left|\varphi_n\right\rangle\left\langle \varphi_n\right|=\textbf{1}_{\mathcal{H}}$.
\end{proof}
\subsection{The coherent states transform and its range}
The resolution of identity $(\ref{RI-Gener})$ lead us to define the associated coherent states transform, which maps the Hilbert space $\mathcal{H}$ onto a subspace of the Hilbert space $L^{2,\beta}(\mathcal{D}_L)$, as follows.

\begin{proposition}
The generalized Bargmann transform associated with the coherent states $\vartheta_{z,m,\beta}$ is the isometric map $\mathcal{B}_{m,\beta}:\mathcal{H}\longrightarrow L^{2,\beta}(\mathcal{D}_L)$,
defined by
\begin{equation}
 \mathcal{B}_{m,\beta}\left[ \phi\right] (z):=\left( \mathcal{N}_{m,\beta}(z\overline{z} )\right) ^{\frac{1}{2}%
}\langle \phi ,\vartheta_{z,m,\beta}\rangle_{\mathcal{H}} .\label{W_beta,n}
\end{equation} 
Thus, for $\phi,\ \psi \in \mathcal{H}$, we have 
\begin{equation}
\langle \phi ,\psi\rangle_{\mathcal{H}}=\langle \mathcal{B}_{m,\beta}\left[ \phi\right] , \mathcal{B}_{m,\beta}\left[ \psi\right]\rangle_{L^{2,\beta}(\mathcal{D}_L)}.
\end{equation}
In particular,
\begin{equation}
\mathcal{B}_{m,\beta}
\left[\varphi_n\right] (\overline{z})=\frac{P_{n,m}^\beta(z,\overline{z})}{\sqrt{x_{n,m}^{\beta}!}},\quad n=0,1,2,\cdots \, .
\end{equation}
\end{proposition}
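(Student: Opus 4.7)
The plan is to verify the explicit action on the orthonormal basis $\{\varphi_n\}$ first, and then derive the isometry property directly from the resolution of identity established in the preceding proposition.

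First I would substitute the series expansion (\ref{GNLCS1}) of $\vartheta_{z,m,\beta}$ into the defining formula (\ref{W_beta,n}) applied to $\phi=\varphi_n$. Using that $\{\varphi_k\}_{k\geq 0}$ is orthonormal in $\mathcal{H}$, only the $k=n$ term survives in
\[
\langle \varphi_n , \vartheta_{z,m,\beta}\rangle_{\mathcal{H}} = (\mathcal{N}_{\beta,m}(z\overline{z}))^{-1/2}\sum_{k=0}^{\infty}\frac{\overline{P^{\beta}_{k,m}(z,\overline{z})}}{\sqrt{x^{\beta}_{k,m}!}}\,\langle \varphi_n , \varphi_k\rangle_{\mathcal{H}},
\]
and then multiplying by $(\mathcal{N}_{m,\beta}(z\overline{z}))^{1/2}$ cancels the normalization, leaving $\overline{P^{\beta}_{n,m}(z,\overline{z})}/\sqrt{x^{\beta}_{n,m}!}$. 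Reading this at the conjugate variable (equivalently, recalling the conjugation symmetry (\ref{P_n,m}) relating $P^{\beta}_{n,m}(z,\overline{z})$ and its complex conjugate) yields the stated formula $\mathcal{B}_{m,\beta}[\varphi_n](\overline{z})=P^{\beta}_{n,m}(z,\overline{z})/\sqrt{x^{\beta}_{n,m}!}$. Square integrability of this image then follows from the orthogonality relation (\ref{ortho_P_n,m,beta}).

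Second, to establish the isometry $\langle\phi,\psi\rangle_{\mathcal{H}}=\langle\mathcal{B}_{m,\beta}[\phi],\mathcal{B}_{m,\beta}[\psi]\rangle_{L^{2,\beta}(\mathcal{D}_L)}$, I would compute the right-hand side starting from the definition (\ref{W_beta,n}): the pointwise factor $\mathcal{N}_{m,\beta}(z\overline{z})$ combines with $(2\pi)^{-1}d\theta\, d\mu_{\beta}(r^{2})$ to produce exactly the measure $d\eta_{\beta,m}(z,\overline{z})$ of the previous proposition, so the integrand reduces to a product of two brackets of the form $\langle\cdot,\vartheta_{z,m,\beta}\rangle$ and its conjugate. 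Rewriting that product as $\langle\psi,|\vartheta_{z,m,\beta}\rangle\langle\vartheta_{z,m,\beta}|\phi\rangle_{\mathcal{H}}$ and integrating, the resolution of identity (\ref{RI-Gener}) collapses the integral to $\langle\phi,\psi\rangle_{\mathcal{H}}$. By polarization this is equivalent to $\|\mathcal{B}_{m,\beta}[\phi]\|^{2}=\|\phi\|^{2}$, proving isometry.

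The entire argument is essentially bookkeeping; no genuine obstacle arises once the resolution of identity and the orthogonality relation (\ref{ortho_P_n,m,beta}) are in hand. The one point that must be handled carefully is the interaction between the complex conjugation present in the defining superposition of $\vartheta_{z,m,\beta}$ (where $\overline{P^{\beta}_{n,m}}$ appears) and the antilinearity of the inner product in $\mathcal{H}$, so that the polynomial image ends up depending anti-holomorphically on $z$ in the manner indicated by evaluating at $\overline{z}$ in the final statement.
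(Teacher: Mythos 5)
Your proof is correct and follows exactly the route the paper intends: the paper states this proposition without a written proof, presenting it as an immediate consequence of the resolution of identity \eqref{RI-Gener} together with the orthonormality of $\{\varphi_n\}$ and the orthogonality relation \eqref{ortho_P_n,m,beta}, which is precisely the bookkeeping you carry out. Your remark about the interaction of the conjugation in \eqref{GNLCS1} with the antilinearity of the inner product correctly accounts for the appearance of $\overline{z}$ as the argument in the formula for $\mathcal{B}_{m,\beta}[\varphi_n]$.
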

Note that the range of $\mathcal{B}_{m,\beta}$ is the span of polynomials $\left(P_{n,m}^\beta(z,\overline{z})/\sqrt{x_{n,m}^{\beta}!}\right)_{n\geq 0}$ which we denote by $\mathcal{A}^{2}_{\beta ,m}(\mathcal{D}_L)$. This subspace of $L^{2,\beta}(\mathcal{D}_L)$ can be compared with the eigenspace of a specific operator that we construct in the following way. We start by denoting 
\begin{equation}
\widetilde{P}_{n,m}^{\beta }(z,\overline{z}):=\frac{P_{n,m}^{\beta}(z,\overline{z} )}{\sqrt{ \;x_{n,m}^{\beta}!}}\text{,}
 \label{P_n,m,b} 
\end{equation}
which form an orthonormal basis in the full Hilbert space $L^{2,\beta}(\mathcal{D}_L)$. Next, we define two pairs of operators, $A_i^\beta, \; A_i^{\beta\dag}\; \; i= 1,2$, and some related operators by using the orthonormalized polynomials $(\ref{P_n,m,b})$ as follows.  
\begin{eqnarray}
  A^\beta_1 \widetilde{P}_{n,m}^{\beta } = \sqrt{x^\beta_{n,m}}\widetilde{P}_{n-1,m}^{\beta }, \;\; n = 1,2,3, \ldots , & \qquad & A^\beta_1 \widetilde{P}_{0,m}^{\beta } = 0, \nonumber\\
  A^\beta_2 \widetilde{P}_{n,m}^{\beta } = \sqrt{x^\beta_{m,n}}\widetilde{P}_{n,m-1}^{\beta }, \;\; m = 1,2,3, \ldots , & \qquad & A^\beta_2 \widetilde{P}_{n,0}^{\beta } = 0,
\label{global-cr-ann1}
\end{eqnarray}
and their adjoints
\begin{equation}
  A^{\beta\dag}_1 \widetilde{P}_{n,m}^{\beta } = \sqrt{x^\beta_{n + 1,m}}\widetilde{P}_{n+1,m}^{\beta }, \; n=0,1,2,... , \quad
  A^{\beta\dag}_2 \widetilde{P}_{n,m}^{\beta } = \sqrt{x^\beta_{m + 1,n}}\widetilde{P}_{n,m+1}^{\beta }, \;\; m = 0,1,2, ... .
\label{global-cr-ann2}
\end{equation} 
The two operators $A^\beta_1, A^{\beta\dag}_1$ commute with $A^\beta_2, A^{\beta\dag}_2$ so that, together with the identity operator $I$ on $L^{2,\beta}(\mathcal{D}_L)$. Recalling that $[ A_1^{\beta\dag},  A_2^{\beta\dag}] = 0$, the basis vectors $\widetilde{P}_{n,m}^{\beta }$ themselves may be written in terms of the operators
$A_i^{\beta\dag}, \; i = 1,2$, as
\begin{equation}
 \widetilde{P}_{n,m}^{\beta } = \frac {( A_1^{\beta\dag})^n\; (A_2^{\beta\dag})^m}{\sqrt{x^\beta_{n,m}!\;x^\beta_{m,0}!}} \widetilde{P}_{0,0}^{\beta } ,
\label{oprep-polyn}
\end{equation}
 where the initial vector $\widetilde{P}_{0,0}^{\beta }$ is just the constant function
$$
\widetilde{P}_{0,0}^{\beta }(z , \overline{z}) = 
\frac 1{\sqrt{\zeta_0(\beta)}} , \qquad  \;z \in \mathbb{C}. $$
\begin{definition}
Two operators are defined as
\begin{equation}
  L_i^{\beta} := \frac 12 (A_i^{\beta\dag}A_i^{\beta} + A_i^{\beta}A_i^{\beta\dag}), \qquad i = 1,2 .
\label{landau-ops}
\end{equation}
\end{definition}
By $(\ref{global-cr-ann1})-(\ref{global-cr-ann2})$ we easily see that their action on the basis vectors $P^\beta_{m,n}$ is given by
\begin{equation}
  L_1^\beta \widetilde{P}_{n,m}^{\beta } = \lambda_{m,n}^\beta\widetilde{P}_{n,m}^{\beta }, \qquad  L_2^\beta \widetilde{P}_{n,m}^{\beta } = \lambda_{n,m}^\beta\widetilde{P}_{n,m}^{\beta },
\label{landop2}
\end{equation}
where
\begin{equation}
\lambda_{m,n}^\beta=\frac 12 (x_{m+1,n}^\beta + x_{m,n}^\beta ).
\end{equation}
Clearly, $[L_1^\beta , L_2^\beta] = 0$, and the spectrum of each operator is infinitely degenerate.\\
\\
These operators are useful since, for example, the range of the coherent state transform $\mathcal{B}_{m,\beta}$ could be characterized as the eigenspace of $L_1^\beta$ corresponding to the eigenvalue $\lambda_{m,n}^{\beta}$. This statement will be illustrated below with a particular choice of the measure $d\mu_\beta(r)$.
\section{The example of the measure $r^{\beta}e^{-r}dr$}
We illustrate the above construction with the example of the measure $d\mu_\beta (r):=r^\beta e^{-r}dr$, which leads to a new generalization of the true-polyanalytic Bargmann transform. For this, we make use of the vectors basis $\varphi_n^\beta $ in the Hilbert space $L^2(\mathbb{R},\; d\omega_{\beta}(x))$ as defined by \eqref{H_m_beta and p_m_beta}.
\subsection{Case $m\neq 0$} According to \cite{MZ} the measure $r^{\beta}e^{-r}dr$ gives rise to the polynomials $\phi _{n}(r;\beta )=(-1)^n L_{n}^{\left( \beta \right) }\left( r\right) ,$ where $L_{n}^{\left( \beta \right) }$ is the Laguerre polynomial (\cite{askey}, p.242). It has been shown \cite{MZ} that the resulting complex polynomials from $(\ref{P_m,n})$ by the above procedure are
\begin{equation}
H_{n,m}^{(\beta )}(z,\overline{z})=\frac{1}{m!}\sum_{k=0}^{m}\left( 
\begin{array}{c}
m \\ 
k
\end{array}
\right) \frac{\left( \beta +1\right) _{n}}{\left( \beta +1\right) _{n-k}}%
\left( -1\right) ^{k}z^{n-k}\overline{z}^{m-k}, \label{GHPoly}
\end{equation}
for $n\geq m$. In other words, 
\begin{equation}
H_{n,m}^{(\beta )}(z,\overline{z})=(-1)^m z^{n-m}L_{m}^{\left(
\beta +n-m\right) }\left( z\overline{z}\right) ,\quad n\geq m.  
\end{equation}
When $n<m$ they are defined by $H_{n,m}^{(\beta )}(z,\overline{z}
)=H_{m,n}^{(\beta )}(\overline{z},z)$. In \cite{MZ} these polynomials were denoted by $Z_{m,n}^{(\beta)}(z,\overline{z})$. These polynomials can be rewritten for all $n,m\in \mathbb{Z}_+$ as
\begin{equation}
H_{n,m}^{(\beta )}(z,\overline{z})=(-1)^{n \wedge m}|z|^{|n-m|}e^{i(n-m)\arg z}L_{n \wedge m}^{(|n-m|+\beta)}(z\overline{z}).\label{GCHP}
\end{equation}
 The function $\zeta _{n}\left( \beta \right)$ in $\left( \ref{orth-rel of phi_n}\right) $ together with coefficients $c_{j}\left( n;\beta \right) $ in $\left( \ref{phi_n}\right) $ were given by 
\begin{equation}
\zeta _{n}\left( \beta\right) =\frac{\Gamma \left( \beta +n+1\right) }{n!}%
,\text{ \ \ }c_{j}\left( n;\beta\right) =\frac{\left( -1\right)
^{n-j}\left( \beta +1\right) _{n}}{\left( n-j\right) !j!\left( \beta
+1\right) _{n-j}}\text{.} 
\end{equation}
The orthogonality relation satisfied by these polynomials is
\begin{eqnarray}
\int_{\mathbb{C}}H_{j,n}^{\left( \beta \right) }\left( z,\overline{z}\right)\overline{H_{l,m}^{\left( \beta \right) }\left( z,\overline{z}\right)}(z\bar{z})^{\beta}e^{-z\bar{z}} rdrd\theta = \pi\frac{ \Gamma \left( \beta +j\vee n+1\right)}{\left(j\wedge n\right) ! }\delta_{j,l}\delta_{n,m}.\label{OR for HP}
\end{eqnarray}
\begin{remark}
For $\beta =0$, the polynomials in $(\ref{GHPoly})$ reduces to the complex Hermite polynomials: 
\begin{equation}
H_{m,n}(z,\overline{z})=\left( m\wedge n\right) !H_{m,n}^{(0)}(z,\overline{z}%
)=\sum_{k=0}^{m\wedge n}\left( 
\begin{array}{c}
m \\ 
k%
\end{array}%
\right) \left( 
\begin{array}{c}
n \\ 
k%
\end{array}%
\right) \left( -1\right) ^{k}k!z^{m-k}\overline{z}^{n-k}  \label{2.20}
\end{equation}%
which were first introduced by Ito \cite{Ito}. They are orthogonal with respect to the weight $e^{-x^2-y^2}$ on $\mathbb R^2$. It must be noted that a more general class of polynomials, called  multidimensional Hermite polynomials.  Their weight function is $\exp(- \sum_{j,k} c_{j,k} x_j x_k)$ on $\mathbb R^n$, where
the quadratic form $\sum_{j,k} c_{j,k} x_j x_k$ is assumed to be positive definite.  Although the   more general
multidimensional Hermite polynomials were known long before the Ito $2D$ Hermite polynomials, the structure of the multidimensional Hermite polynomials is much more 
complicated and very little is still known about them. On the other hand the $2D$-Hermite polynomials are expressed in terms of  $z$ and $\bar z$ instead of $x$ and $y$ resulting in a major simplification of the results.
\end{remark}
In this case, the polynomials in $(\ref{GCHP})$ reads
\begin{equation}
P_{n,m}^{\beta}(z,\bar{z})=(\Gamma(\beta +m+1))^{-\frac{1}{2}}H_{n,m}^{(\beta )}(z,\bar{z}),
\end{equation}
the generalized factorial in $(\ref{Gen_fact})$ becomes
\begin{equation}
x_{n,m}^{\beta}!=\frac{ \Gamma \left( \beta +n\vee m+1\right)
}{\Gamma(\beta +m+1)\left( n\wedge m\right) !}
\end{equation}
and the definition of GNLCS takes the following form.
\begin{definition}
Let $\beta \geq 0$ and $%
m=0,1,2,...,\infty $. The GNLCS can be defined throughout the following superposition
\begin{equation}
\vartheta_{z,m,\beta} :=\left(\mathcal{N}_{\beta,m}(z\overline{z})\right) ^{-\frac{1}{2}%
}\sum_{n=0}^{\infty }\frac{\overline{H_{n,m}^{\left( \beta \right) }\left( z,%
\overline{z}\right) }}{\sqrt{\frac{ \Gamma \left( \beta +n\vee m+1\right)
}{\left( n\wedge m\right) !}}}\varphi_n^\beta \label{GNLCS}
\end{equation}
where $\mathcal{N}_{\beta,m}(z\overline{z})$ is a normalization factor. Here $\{\varphi_n^ \beta\} $ are given by \eqref{H_m_beta and p_m_beta}.
\end{definition}
Now, we give the overlap between two GNLCS. See Appendix B for the proof.

\begin{proposition}
The overlap is given by
\begin{eqnarray*}
\langle \vartheta_{z,m,\beta}|\vartheta_{w,m,\beta}\rangle &=&\left( \mathcal{N}_{\beta ,m}(z\overline{z})\right) ^{-\frac{1}{2}}\left(\mathcal{N}_{\beta ,m}(w\overline{w})\right) ^{-\frac{1}{2}}\left[ \sum\limits_{j=0}^{m-1 } \frac{j !(\overline{z}w)^{m-j}}{\Gamma \left( \beta + m+1\right) }
L_{j}^{\left( \beta +m-j\right) }\left( z\overline{z}\right) L_{j}^{\left( \beta +m-j\right) }\left( w\overline{w}\right) \right.\\
&&\left. +\frac{(\beta+1)_m}{m!\Gamma(\beta+1)}\sum\limits_{k=0}^{m}\sum\limits_{l=0}^{m}\frac{(-m)_k(-m)_{l}(z\overline{z})^k(w\overline{w})^{l}}{k!l!(\beta+1)_k(\beta+1)_{l}}{}_{2}F_{2}\left( 
\begin{array}{c}
1,m+\beta +1 \\ 
k+\beta +1,l+\beta +1
\end{array}%
\big|z\overline{w}\right)\right].
\end{eqnarray*}
in terms of the ${}_2F_2$ hypergeometric series.
\end{proposition}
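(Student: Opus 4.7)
The plan is to compute the overlap $\langle\vartheta_{z,m,\beta}|\vartheta_{w,m,\beta}\rangle$ directly by inserting the superposition \eqref{GNLCS} into the inner product and invoking the orthonormality of $\{\varphi_n^\beta\}$ in $L^2(\mathbb{R},d\omega_\beta)$. This collapses the double series to
\[
\bigl(\mathcal{N}_{\beta,m}(z\bar z)\,\mathcal{N}_{\beta,m}(w\bar w)\bigr)^{-1/2}\sum_{n=0}^{\infty}\frac{(n\wedge m)!\;H_{n,m}^{(\beta)}(z,\bar z)\,\overline{H_{n,m}^{(\beta)}(w,\bar w)}}{\Gamma(\beta+n\vee m+1)},
\]
after which I split the sum at $n=m$, in accordance with the piecewise structure of \eqref{GCHP}.

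For the finite anti-holomorphic piece $0\leq n<m$, \eqref{GCHP} gives $H_{n,m}^{(\beta)}(z,\bar z)=(-1)^n\bar z^{m-n}L_n^{(\beta+m-n)}(z\bar z)$ and the denominator becomes $\Gamma(\beta+m+1)/n!$, so that the product $H_{n,m}^{(\beta)}(z,\bar z)\overline{H_{n,m}^{(\beta)}(w,\bar w)}=(\bar z w)^{m-n}L_n^{(\beta+m-n)}(z\bar z)L_n^{(\beta+m-n)}(w\bar w)$ produces the first finite sum of the proposition verbatim (after relabeling $j=n$). For the remaining tail $n\geq m$, substituting $H_{n,m}^{(\beta)}(z,\bar z)=(-1)^m z^{n-m}L_m^{(\beta+n-m)}(z\bar z)$ and setting $j=n-m$ leaves the object
\[
S:=m!\sum_{j=0}^{\infty}\frac{(z\bar w)^j\,L_m^{(\beta+j)}(z\bar z)\,L_m^{(\beta+j)}(w\bar w)}{\Gamma(j+m+\beta+1)},
\]
which I must recast in hypergeometric form.

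To do so, I would expand each Laguerre polynomial via $L_m^{(\alpha)}(x)=\sum_{k=0}^{m}(-1)^k\Gamma(m+\alpha+1)x^k/[(m-k)!\,\Gamma(\alpha+k+1)\,k!]$, interchange the finite $(k,l)$-sums with the absolutely convergent $j$-sum, and rewrite the three gamma quotients in $j$ via Pochhammer symbols $\Gamma(j+\gamma+1)=\Gamma(\gamma+1)(\gamma+1)_j$ for $\gamma\in\{m+\beta,\,k+\beta,\,l+\beta\}$. After the cancellations the $j$-series reduces to $\sum_j(m+\beta+1)_j(z\bar w)^j/[(k+\beta+1)_j(l+\beta+1)_j]$, which is precisely ${}_2F_2(1,m+\beta+1;k+\beta+1,l+\beta+1;z\bar w)$ since $(1)_j=j!$ absorbs the implicit $j!$ in the definition of ${}_2F_2$. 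Converting the remaining prefactors through $(-1)^k/(m-k)!=(-m)_k/m!$ and $\Gamma(\beta+k+1)=\Gamma(\beta+1)(\beta+1)_k$ telescopes the overall constant to $(\beta+1)_m/(m!\,\Gamma(\beta+1))$, matching the statement. The principal obstacle is purely the bookkeeping through these Pochhammer/gamma simplifications; the interchange of the finite $(k,l)$-sum with the infinite $j$-sum is harmless on the disc $\mathcal{D}_{\beta,m}$, where $\mathcal{N}_{\beta,m}$ is finite.
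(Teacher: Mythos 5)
Your proposal is correct and follows essentially the same route as the paper's Appendix B: collapse the double series by orthonormality of $\{\varphi_n^\beta\}$, split the sum at $n=m$, read off the finite anti-holomorphic part directly, and reduce the tail to the ${}_2F_2$ by expanding the polynomials, shifting $j=n-m$, interchanging the finite $(k,l)$-sums with the $j$-series, and applying the Pochhammer identities $(-m)_k=(-1)^k m!/(m-k)!$ and $\Gamma(\gamma+j+1)=\Gamma(\gamma+1)(\gamma+1)_j$. The only cosmetic difference is that you expand via the explicit Laguerre formula while the paper works from the monomial form \eqref{GHPoly}; these are the same expansion, and your bookkeeping checks out.
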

\begin{proposition}
The GNLCS $(\ref{GNLCS})$ satisfy the following resolution of the identity
\begin{equation}
\int_{\mathbb{C}}\left| \vartheta_{z,m,\beta}
\rangle \langle \vartheta_{z,m,\beta}\right| d\eta_{\beta ,m}(z)=\textbf{1}_{\mathcal{H}},
\end{equation}
where
\begin{eqnarray*}
d\eta_{\beta ,m}(z)&=&\left(\Gamma(\beta +1)\right)^{-1}\left[ \sum\limits_{j=0}^{m-1} \frac{j !(z\bar{z})^{m-j}}{\Gamma \left( \beta + m+1\right) }
\left[ L_{j}^{\left( \beta +m-j\right) }\left( z\bar{z}\right) \right]^{2}\right.\\
  &+& \left.\frac{(\beta+1)_m}{m!\Gamma(\beta+1)}\sum\limits_{k,l=0}^{m}\frac{(-m)_k(-m)_{l}\; (z\bar{z})^{k+l}}{k!l!(\beta+1)_k(\beta+1)_{l}}{}_{2}F_{2}\left( 
\begin{array}{c}
1,m+\beta +1 \\ 
k+\beta +1,l+\beta +1
\end{array}
\big|z\bar{z}\right)\right](z\bar{z})^{\beta}e^{-z\bar{z}}d\nu(z),
\end{eqnarray*}
in terms of the ${}_2F_2-$series and the Lebesgue measure $d\nu $.
\end{proposition}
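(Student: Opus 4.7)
The plan is to specialize the general resolution of identity (the second proposition of Section 3.2) to the choice $d\mu_{\beta}(r)=r^{\beta}e^{-r}dr$. The abstract prescription
\[
d\eta_{\beta ,m}(z,\bar z) = (2\pi)^{-1}d\theta\,\mathcal{N}_{\beta ,m}(z\bar z)\,d\mu_{\beta}(z\bar z)
\]
will yield the announced density as soon as two ingredients are in place: a rewriting of $(2\pi)^{-1}d\theta\,d\mu_{\beta}(z\bar z)$ in Lebesgue form on $\mathbb{C}$, and a closed-form expression for the normalization $\mathcal{N}_{\beta ,m}(z\bar z)$.

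The first ingredient is routine. In polar coordinates $z=\rho e^{i\theta}$, the substitution $r=\rho^{2}$ with $dr=2\rho\,d\rho$ turns $r^{\beta}e^{-r}dr$ into $(z\bar z)^{\beta}e^{-z\bar z}\cdot 2\rho\,d\rho$, so $(2\pi)^{-1}d\theta\,d\mu_{\beta}(z\bar z)$ becomes a numerical multiple of $(z\bar z)^{\beta}e^{-z\bar z}d\nu(z)$, where $d\nu$ is the Lebesgue measure on $\mathbb{C}$. This is exactly the density structure featured in the statement.

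The second ingredient is supplied by the overlap proposition proved in Appendix B. Unit-norm of $\vartheta_{z,m,\beta}$ forces
\[
\mathcal{N}_{\beta ,m}(z\bar z) = \sum_{n=0}^{\infty}\frac{(n\wedge m)!\,\bigl\lvert H_{n,m}^{(\beta)}(z,\bar z)\bigr\rvert^{2}}{\Gamma(\beta + n\vee m+1)},
\]
and this is precisely the quantity returned by specialising the overlap kernel $\langle \vartheta_{z,m,\beta}|\vartheta_{w,m,\beta}\rangle$ to $w=z$ after clearing the two factors $(\mathcal{N}_{\beta ,m})^{-1/2}$. Reading off the previous proposition at $w=z$ therefore identifies $\mathcal{N}_{\beta ,m}(z\bar z)$ with the bracket expression appearing inside $d\eta_{\beta ,m}(z)$, and substituting this into the general resolution of identity gives the claim up to the collection of numerical constants into the prefactor of the statement.

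The real labour is hidden in the overlap proposition, not in the present one. There, one splits the series $\sum_{n\geq 0}\overline{H_{n,m}^{(\beta)}(z,\bar z)}\,H_{n,m}^{(\beta)}(w,\bar w)/x_{n,m}^{\beta}!$ at $n=m$, applies the Laguerre representation $H_{n,m}^{(\beta)}(z,\bar z)=(-1)^{n\wedge m}z^{n-m}L_{n\wedge m}^{(|n-m|+\beta)}(z\bar z)$ on each range, and resums the $n\geq m$ tail into the double ${}_{2}F_{2}$ hypergeometric series using standard power-series identities for Laguerre polynomials. Granted that computation, the present proposition is a direct corollary of the general resolution of identity together with the polar-coordinate rewriting above.
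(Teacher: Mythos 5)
Your proposal matches the paper's own proof: the paper likewise obtains the resolution of the identity by running the argument of the general Proposition 3.2.2 with the orthogonality relation of the $H_{n,m}^{(\beta)}$, and reads off $\mathcal{N}_{\beta,m}(z\bar z)$ from the overlap formula of Proposition 4.1.1 at $w=z$. Both treatments leave the final bookkeeping of the numerical prefactor implicit, so there is nothing substantive to add.
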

\begin{proof}
The resolution of the identity of the Hilbert space $\mathcal{H}$ follows from the orthogonality relation $(\ref{OR for HP})$ of $H_{n,m}^{\left( \beta \right) }\left( z,\overline{z}\right)$ by using similar arguments as in the proof of Proposition 3.2.2. The normalization is deduced from Proposition 4.1.1. by taking $z=w$ with the condition $\langle \vartheta_{z,m,\beta}|\vartheta_{z,m,\beta}\rangle =1$.
\end{proof}
Now, we can define the associated coherent states transform which maps the Hilbert space $L^2(\mathbb{R},\; d\omega_{\beta}(x))$ onto the Hilbert space $\mathcal{A}_{\beta ,m}^2(\mathbb{C})$ of complex-valued functions on $\mathbb{C}$, subspace of the larger Hilbert space $L^2(\mathbb{C},d\eta_{\beta ,m}(z))$ of square integrable functions with respect to the measure $d\eta_{\beta ,m}$ in \eqref{RI-Gener}.
\begin{theorem}
The GNLCS (\ref{GNLCS}) give rise to a generalized Bargmann transform through the isometric embedding $\mathcal{B}_{m,\beta}
:L^{2}(\mathbb{R},d\omega_{\beta}(x))\longrightarrow \mathcal{A}_{\beta ,m}^2(\mathbb{C})$ defined by
\begin{equation}
 \mathcal{B}_{m,\beta}[\varphi ](z)=\int_{\mathbb{R}}B_{\beta,m}(z,x)\varphi(x)d\omega_{\beta}(x),\label{BAR}
\end{equation}%
where
\begin{eqnarray}
B_{\beta,m}(z,x)=\sum\limits_{n=0}^{m-1}\frac{2^{n/2}}{\sqrt{(\beta +1)_n}}H_n(x,\beta)\left[\frac{(-1)^n z^{m-n}\sqrt{n!}}{\sqrt{\Gamma(\beta +m+1)}}L_n^{(m-n+\beta)}(z\bar{z})-\frac{(-1)^m \bar{z}^{n-m}\sqrt{m!}}{\sqrt{\Gamma(\beta +n+1)}}L_m^{(n-m+\beta)}(z\bar{z}) \right] \nonumber \\ +\frac{z^m}{\sqrt{\Gamma \left( \beta +1\right)m!}}\sum_{k=0}^{m}\frac{(-m)_k(\beta +1-k)_{k}}{k!\left(z\bar{z}\right)^{k}}F_{1:0;0;0}^{1:0;0;1}\left(
\begin{array}{c}
[1:1,2,1]:-;-;[\beta :1] \\ \left[\beta -k+1:1,2,2\right]:-;-;-
\end{array}
 \sqrt{2}x\overline{z}, -\overline{z}^2/2, -\overline{z}^2\right).\label{B_beta,m}
\end{eqnarray}
\end{theorem}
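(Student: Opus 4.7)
The plan is to specialize the general formula from Proposition 3.3.1 to the present example, where $\mathcal{H} = L^2(\mathbb{R}, d\omega_\beta(x))$ and the Fock basis is realized by the associated Hermite polynomials $\varphi_n^\beta(x) = 2^{-n/2}(\beta+1)_n^{-1/2}H_n(x,\beta)$. Inserting the series representation (\ref{GNLCS}) of $\vartheta_{z,m,\beta}$ into $\mathcal{B}_{m,\beta}[\varphi](z) = (\mathcal{N}_{m,\beta}(z\overline z))^{1/2}\langle \varphi, \vartheta_{z,m,\beta}\rangle$ and interchanging summation and integration (legitimate because the coefficients form a square-summable family on $\mathcal{D}_L$), the kernel is identified as
$$B_{\beta,m}(z,x) = \sum_{n=0}^{\infty}\frac{H_{n,m}^{(\beta)}(z,\overline z)}{\sqrt{\Gamma(\beta+n\vee m+1)/(n\wedge m)!}}\,\varphi_n^\beta(x).$$
The remainder of the argument is the algebraic identification of this series with the announced closed form.

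I would then use the case distinction (\ref{GCHP}) to split $B_{\beta,m}$ into a finite part $\sum_{n=0}^{m-1}$, for which $H_{n,m}^{(\beta)}(z,\overline z)$ is expressed via $L_n^{(\beta+m-n)}(z\overline z)$, and an infinite tail $\sum_{n\ge m}$, for which it is expressed via $L_m^{(\beta+n-m)}(z\overline z)$. The finite part directly produces the first summand inside the bracket of $B_{\beta,m}$ once $\varphi_n^\beta(x)$ is substituted. For the tail I factor out $(-1)^mz^{n-m}$, expand $L_m^{(\beta+n-m)}(z\overline z)$ through its standard series $L_m^{(\alpha)}(t) = \sum_{k=0}^{m}\binom{m+\alpha}{m-k}(-t)^k/k!$ and rewrite $H_n(x,\beta)$ via its explicit polynomial expansion deduced from the Askey--Wimp recurrence (\ref{Recu_H_m_b}). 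After regrouping the Pochhammer factors $\Gamma(\beta+n+1)^{-1/2}$ and $(\beta+1)_n^{-1/2}$ into integer-step Pochhammer symbols, the resulting triple sum in $n$, $k$ and the expansion index of $H_n$ collapses, term by term in $k$, to the Kampé de F\'eriet series $F^{1:0;0;1}_{1:0;0;0}\bigl([1:1,2,1]:-;-;[\beta:1]; [\beta-k+1:1,2,2]:-;-;-\bigm|\sqrt 2\,x\overline z,-\overline z^2/2,-\overline z^2\bigr)$. Summing over $k$ with the residual prefactor $(-m)_k(\beta+1-k)_k/(k!(z\overline z)^k)$ and $z^m/\sqrt{\Gamma(\beta+1)m!}$ reproduces the third summand of the asserted kernel. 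The second (subtracted) term inside the bracket records the spurious contributions generated when the $n\ge m$ closed-form summation is formally extended down to $n=0,\ldots,m-1$, equal to $(-1)^m\overline z^{n-m}\sqrt{m!}L_m^{(n-m+\beta)}(z\overline z)/\sqrt{\Gamma(\beta+n+1)}$; subtracting them restores the correct value.

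The main obstacle is the final identification. The delicate point is the rearrangement of the triple sum in $n$ and the two expansion indices, together with the conversion of $\Gamma(\beta+n+1)^{-1/2}$ and $(\beta+1)_n^{-1/2}$ into ratios of integer-step Pochhammer symbols compatible with the Kampé de F\'eriet indices $[1:1,2,1]$ and $[\beta-k+1:1,2,2]$; this is the step where one can easily lose a sign or a factor of $\sqrt 2$. The isometry claim itself needs no separate argument: it follows at once from the resolution of the identity of Proposition 3.3.1 applied to $d\mu_\beta(r) = r^\beta e^{-r}dr$, and the range $\mathcal{A}^2_{\beta,m}(\mathbb{C})$ is identified by construction with the closed span of the $P^\beta_{n,m}(z,\overline z)$.
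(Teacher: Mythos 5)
Your proposal follows essentially the same route as the paper's Appendix D: expand the kernel over the basis $\varphi_n^\beta$, split the sum at $n=m$, extend the tail sum formally down to $n=0$ and subtract the spurious terms (which is exactly where the second, subtracted term in the bracket comes from), and reduce the extended tail to the generalized Lauricella function after expanding $L_m^{(j-m+\beta)}$. The one step you leave as ``the triple sum collapses'' is precisely the paper's Lemma 1 --- the generating function $\sum_{n\ge 0} t^n H_n(x,\beta)/(c)_n = F^{1:0;0;1}_{1:0;0;0}(\cdots)$ --- which the paper proves separately using W\"unsche's explicit double-sum expansion of $H_n(x,\beta)$ (imported from the literature rather than ``deduced from the recurrence'') together with the Srivastava--Manocha series-rearrangement lemmas, so treat that as a self-contained identity requiring its own proof rather than routine bookkeeping.
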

The special function $F_{1:0;0;0}^{1:0;0;1}$ in the right hand side of the last equation is the generalized Lauricella function, see Appendix C for its definition and details. The proof of Theorem 4.1.1. is given in Appendix D.\\
When $\beta =0$, the measure $d\omega_{0}(x)=\pi^{-\frac{1}{2}} e^{-x^2}dx$ and $\mathcal{A}_{0 ,m}^2(\mathbb{C})$ turns out to be the space of true-$m$-polyanalytic functions that is the orthogonal difference $\mathfrak{F}_{m}^2(\mathbb{C})\circleddash \mathfrak{F}_{m-1}^2(\mathbb{C})$ between two consecutive $m$-analytic spaces 
\begin{equation}
\mathfrak{F}_{m}^2(\mathbb{C})=\{g:\mathbb{C}\rightarrow \mathbb{C},\, \int_{\mathbb{C}} |g(z)|^2e^{-|z|^2}d\nu(z)<+\infty ,\, \overline{\partial}^m g(z)=0 \}
\end{equation}
The space $\mathcal{A}_{0 ,m}^2(\mathbb{C})$ can also be realized as eigenspace in $L^2\left(\mathbb{C}, e^{-|z|^2}d\nu(z)\right)$ of the operator $\widetilde{\Delta }_0$ given by \eqref{delta0} and corresponding to the eigenvalue $m$.
\begin{corollary}
For $\beta =0$, the integral transform \eqref{BAR} reduces to the true-polyanalytic Bargmann transform $\mathcal{B}_{m,0}
:L^{2}(\mathbb{R},\pi^{-\frac{1}{2}} e^{-x^2}dx)\longrightarrow \mathcal{A}_{0 ,m}^2(\mathbb{C})$ given by
\begin{equation}
 \mathcal{B}_{m,0}[\varphi ](z)=\pi^{-\frac{1}{2}}\int_{\mathbb{R}}B_{0,m}(z,x)\varphi(x)e^{-x^2}dx,\label{BAR0}
\end{equation}%
where
\begin{equation}
B_{0,m}(z,x)=(-1)^m(2^m m!)^{-\frac{1}{2}}e^{\sqrt{2}x\overline{z}-\frac{1}{2}\bar{z}^2} H_m\left(x-\frac{z+\bar{z}}{\sqrt{2}}\right).
\end{equation}
\end{corollary}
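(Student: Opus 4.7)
The plan is to specialize the kernel $B_{\beta,m}(z,x)$ of Theorem 4.1.1 at $\beta=0$ and show it collapses to the compact Hermite expression claimed. First I note the easy simplifications: at $\beta=0$ the weight becomes $d\omega_0(x)=\pi^{-1/2}e^{-x^2}dx$, and the associated Hermite polynomials of Proposition 2.4.1 reduce to the classical ones, $H_n(x,0)=H_n(x)$, so $\varphi_n^0(x)=(2^n n!)^{-1/2}H_n(x)$. In the generalized-Lauricella tail of $B_{\beta,m}$, the Pochhammer factor $(\beta+1-k)_k=(1-k)_k$ contains the zero factor whenever $k\ge 1$, hence only the term $k=0$ survives, and the finite sum over $n$ from $0$ to $m-1$ also simplifies because $(\beta+1)_n=n!$. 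These observations reduce $B_{0,m}(z,x)$ to a single, more manageable expression.

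Rather than grind through the resulting finite double sum directly, I would verify the equality of kernels by testing both sides on the orthonormal basis $\{\varphi_n\}_{n\ge 0}$ of $L^2(\mathbb{R},\pi^{-1/2}e^{-x^2}dx)$. By Proposition 3.3.1 specialized to $\beta=0$, the isometry $\mathcal{B}_{m,0}$ must act as
\[
\mathcal{B}_{m,0}[\varphi_n](\bar z)=\frac{P_{n,m}^0(z,\bar z)}{\sqrt{x_{n,m}^0!}}=\frac{H_{n,m}^{(0)}(z,\bar z)}{\sqrt{m!\,x_{n,m}^0!}},
\]
with $x_{n,m}^0!=(n\vee m)!/(n\wedge m)!$ from Section 4.1. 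Since an isometry is determined by its values on an orthonormal basis, it is enough to show that the candidate kernel
\[
B_{0,m}(z,x)=\frac{(-1)^m}{\sqrt{2^m m!}}\,e^{\sqrt 2\,x\bar z-\tfrac12\bar z^{2}}\,H_m\!\left(x-\tfrac{z+\bar z}{\sqrt 2}\right)
\]
produces the same values when integrated against $\varphi_n(x)e^{-x^2}/\sqrt{\pi}$.

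The computation itself I would carry out by expanding $H_m(x-(z+\bar z)/\sqrt 2)$ via the classical Hermite addition identity $H_m(u-v)=\sum_{j=0}^{m}\binom{m}{j}H_j(u)(-2v)^{m-j}$ (or equivalently by $H_m(a+b)=\sum\binom{m}{j}H_j(a)(2b)^{m-j}$ followed by rewriting the shift), factoring out $e^{\sqrt 2 x\bar z-\bar z^{2}/2}$, and then reducing each remaining integral to the classical Bargmann identity
\[
\pi^{-1/2}\!\int_{\mathbb{R}} e^{\sqrt 2\,x\bar z-\bar z^{2}/2}H_n(x)\,e^{-x^{2}}dx=(\sqrt 2\,\bar z)^n.
\]
Collecting the resulting double sum in $j$ and using the explicit series expansion of $L_{n\wedge m}^{(|n-m|)}(z\bar z)$ then recovers $H_{n,m}^{(0)}(\bar z,z)/\sqrt{m!\,x_{n,m}^0!}$, matching the value of $\mathcal{B}_{m,0}[\varphi_n]$ demanded above.

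The main obstacle is the combinatorial step of recognizing that the double sum produced by the Hermite addition formula collapses into the explicit Laguerre-type series \eqref{GHPoly} at $\beta=0$; once that identification is made, everything else is bookkeeping. A shortcut, which I would use if the combinatorics become unwieldy, is to invoke the fact established in \cite{Abreusampling,abfei14,MouaynMN} that the compact kernel above is already known to act as $\varphi_n\mapsto (\text{const})\,H_{n,m}^{(0)}(\bar z,z)$ on Hermite functions, and then invoke uniqueness of the isometry from its values on $\{\varphi_n\}$ to identify it with the $\beta=0$ specialization of $B_{\beta,m}(z,x)$ from Theorem 4.1.1. The characterization of the range $\mathcal{A}_{0,m}^2(\mathbb{C})$ as the orthogonal complement $\mathfrak{F}_m^2(\mathbb{C})\ominus\mathfrak{F}_{m-1}^2(\mathbb{C})$, or equivalently as the $m$-eigenspace of $\widetilde{\Delta}_0$, then follows directly from the structure of the true-polyanalytic Bargmann transform already recalled in the preceding paragraph of the paper.
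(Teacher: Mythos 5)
Your overall strategy is legitimate and genuinely different from the paper's. The paper proves the corollary by direct simplification of the $\beta=0$ kernel: it kills the finite sum $\sum_{n=0}^{m-1}$ with the identity $L_m^{(-k)}(t)=(-t)^k\frac{(m-k)!}{m!}L_{m-k}^{(k)}(t)$, and then evaluates the remaining sum $S_m$ by forming the generating function $\sum_m(\cdots)t^m$, invoking Carlitz's formula $\sum_m L_m^{(\alpha-m)}(x)s^m=(1+s)^{\alpha}e^{-xs}$, and recognizing the Hermite generating function $e^{2xu-u^2}$ in the result. Your plan instead tests the claimed compact kernel against the orthonormal basis $\{\varphi_n\}$ and appeals to the fact that an isometry (equivalently, an $L^2$ kernel) is determined by its values on a complete orthonormal set, using Proposition 3.3.1 to know a priori what $\mathcal{B}_{m,0}[\varphi_n]$ must be. That buys independence from the Lauricella machinery entirely, at the cost of a nontrivial integral computation (after the Hermite addition formula you need $\pi^{-1/2}\int e^{\sqrt2x\bar z-\bar z^2/2}H_j(x)H_n(x)e^{-x^2}dx$, not just the single-Hermite Bargmann identity you quote) which you only sketch or outsource to the literature; also watch the normalizations, since $x^0_{n,m}!=(n\vee m)!/\bigl(m!\,(n\wedge m)!\bigr)$, not $(n\vee m)!/(n\wedge m)!$.

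One of your "easy simplifications," however, is wrong and would have sunk you had you pursued the direct route. You claim that $(\beta+1-k)_k\big|_{\beta=0}=(1-k)_k=0$ for $k\ge1$ kills every term of the tail except $k=0$. But the accompanying Lauricella factor with lower parameter $[1-k:1,2]$ has $(1-k)_{i+2j}$ in its denominators, which also vanishes once $i+2j\ge k$; the product is a $0\cdot\infty$ indeterminate form, not zero. Resolving it as the paper does, via $(1-k)_k\,\frac{(1)_{i+2j}}{(1-k)_{i+2j}}=(i+2j+1-k)_k$, shows that the $k\ge1$ terms contribute essentially. Indeed, if only $k=0$ survived, the kernel would be $z^m e^{\sqrt2x\bar z-\bar z^2/2}/\sqrt{m!}$, which is not the stated answer. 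Relatedly, the finite sum $\sum_{n=0}^{m-1}$ does not merely "simplify" at $\beta=0$: it vanishes identically (via the negative-parameter Laguerre identity above), and this must be established if one argues directly from \eqref{B_beta,m}. Neither slip is load-bearing for your basis-verification route, but both would need correcting before any part of the direct computation could be trusted.
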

\begin{proof}
If we take $\beta =0$ in \eqref{BAR} we get 
\begin{eqnarray}
B_{0,m}(z,x)=\sum\limits_{n=0}^{m-1}\frac{2^{n/2}}{\sqrt{n!}}H_n(x)\left[\frac{(-1)^n z^{m-n}\sqrt{n!}}{\sqrt{m!}}L_n^{(m-n)}(z\bar{z})-\frac{(-1)^m \bar{z}^{n-m}\sqrt{m!}}{\sqrt{n!}}L_m^{(n-m)}(z\bar{z}) \right] \nonumber \\  +\frac{z^m}{\sqrt{m!}}\sum_{k=0}^{m}\frac{(-m)_k(1-k)_{k}}{k!\left(z\bar{z}\right)^{k}}F_{1:0;0}^{1:0;0}\left(
\begin{array}{c}
[1:1,2]:-;- \\ \left[1 -k:1,2\right]:-;-
\end{array}
 \sqrt{2}x\overline{z}, -\overline{z}^2/2\right).\label{B_0}
\end{eqnarray}
Making use of the identity (\cite{Sz}, p. 102):
\begin{equation}
L_m^{(-k)}(t)=(-t)^k\frac{(m-k)!}{m!}L_{m-k}^{(k)}(t),\quad 1\leq k\leq m, 
\end{equation}
for $k=n-m$ and $t=z\bar{z}$, we find that the first sum in \eqref{B_0} is zero.\\ 
Denoting the second sum in \eqref{B_0} by $S_m$, straightforward calculations give successively  
\begin{eqnarray*}
(-1)^m\frac{ 2^{\frac{1}{2}m}}{\sqrt{m!}} S_m &=&\frac{ 2^{\frac{1}{2}m}}{m!}(-z)^m\sum_{k=0}^{m}\frac{(-m)_k(1-k)_{k}}{k!\left(z\bar{z}\right)^{k}}F_{1:0;0}^{1:0;0}\left(
\begin{array}{c}
[1:1,2]:-;- \\ \left[1 -k:1,2\right]:-;-
\end{array}
 \sqrt{2}x\overline{z}, -\overline{z}^2/2\right)\\ &=& \frac{ 2^{\frac{1}{2}m}}{m!}(-z)^m\sum_{k=0}^{m}\frac{(-m)_k(1-k)_{k}}{k!\left(z\bar{z}\right)^{k}}\sum_{i,j=0}^\infty \frac{(1)_{i+2j}}{(1-k)_{i+2j}}\frac{(\sqrt{2}x\overline{z})^i}{i!}\frac{(-\overline{z}^2/2)^j}{j!}\\
 &=& \frac{ 2^{\frac{1}{2}m}}{m!}(-z)^m\sum_{k=0}^{m}\frac{(-m)_k}{k!\left(z\bar{z}\right)^{k}}\sum_{i,j=0}^\infty (i+2j+1-k)_{k}\frac{(\sqrt{2}x\overline{z})^i}{i!}\frac{(-\overline{z}^2/2)^j}{j!}\\
 &=& \left(\frac{\sqrt{2}}{\bar{z}}\right)^{m}\sum_{i,j=0}^\infty \frac{(\sqrt{2}x\overline{z})^i}{i!}\frac{(-\overline{z}^2/2)^j}{j!}\sum_{k=0}^{m}\frac{(-m)_k}{k!m!}(i+2j+1-k)_{k}(-z\bar{z})^{m-k}\\
 &=&  \left(\frac{\sqrt{2}}{\bar{z}}\right)^{m}\sum_{i,j=0}^\infty \frac{(\sqrt{2}x\overline{z})^i}{i!}\frac{(-\overline{z}^2/2)^j}{j!}L_{m}^{(i+2j-m)}(z\bar{z}).
\end{eqnarray*}
Next, we multiply $(-1)^m\frac{ 2^{\frac{1}{2}m}}{\sqrt{m!}}S_m$ by $t^m$ and we sum over $m$, as
\begin{equation}
\sum_{m=0}^\infty (-1)^m\frac{ 2^{\frac{1}{2}m}}{\sqrt{m!}}S_m \,t^m =\sum_{i,j=0}^\infty \frac{(\sqrt{2}x\overline{z})^i}{i!}\frac{(-\overline{z}^2/2)^j}{j!}\sum_{m=0}^\infty (\sqrt{2} t/\bar{z})^m L_{m}^{(i+2j-m)}(z\bar{z}).
\end{equation}
We now make use of the generating function of Laguerre polynomials \cite{Carlitz1960}
\begin{equation}
\sum_{m=0}^\infty L_{m}^{(\alpha-m)}(x)s^m=(1+s)^{\alpha }e^{-xs},\qquad |s|<1,
\end{equation}
for $\alpha =i+2j$, $s=\sqrt{2} t/\bar{z}$ and $x=z\bar{z}$ we obtain 
\begin{eqnarray}
\sum_{m=0}^\infty (-1)^m\frac{ 2^{\frac{1}{2}m}}{\sqrt{m!}}S_m \,t^m &=& \sum_{i,j=0}^\infty  e^{-\sqrt{2}zt}\frac{(\sqrt{2}x\overline{z}+2xt)^i}{i!}\frac{\left(-(\bar{z}+\sqrt{2} t)^2/2\right)^j}{j!}\\
&=& \sum_{m=0}^\infty \frac{1}{m!}e^{\sqrt{2}x\overline{z}-\frac{\bar{z}^2}{2}}H_m\left(x-\frac{z+\bar{z}}{\sqrt{2}}\right) \,t^m.
\end{eqnarray}
Then, by comparing coefficients of identical power of $t$, we obtain 
\begin{equation}
S_m =\frac{1}{m!}e^{\sqrt{2}x\overline{z}-\frac{\bar{z}^2}{2}} H_m\left(x-\frac{z+\bar{z}}{\sqrt{2}}\right).
\end{equation}
Finally, 
\begin{equation}
B_{0,m}(z,x)=(-1)^m 2^{-\frac{m}{2}}\sqrt{m!} S_m=(-1)^m (2^m m!)^{-\frac{1}{2}} e^{\sqrt{2}x\overline{z}-\frac{\bar{z}^2}{2}}H_m\left(x-\frac{z+\bar{z}}{\sqrt{2}}\right),
\end{equation}
which is the kernel function of the true-$m$-polyanalytic Bargmann transform.
\end{proof}
\subsection{The analytic case $m=0$} In this section, we particularize the above construction for the measure $d\mu_\beta(r):=r^{\beta}e^{-r}dr$. 
In this case the moments $\mu_{n+\beta}=\Gamma(n+\beta +1)$, the sequence $x_n^{\beta}=n+\beta$, $\lim\limits_{n\rightarrow \infty}x_n^\beta =+\infty$ and the resulting $\mu_\beta -$NLCS defined by $(\ref{NLCS_x_m,beta})$ take the form 
\begin{equation}
\vartheta_{z,\beta} =\left(\mathcal{N}_{\beta}(z\overline{z})\Gamma(\beta+1)\right)^{-\frac{1}{2}}\sum_{n=0}^{\infty }\ \frac{\overline{z}^{n}}{\sqrt{(\beta +1)_n}}
\varphi_n^\beta ,\label{CSs}
\end{equation}
where the normalizing factor is given by
\begin{equation}
\mathcal{N}_{\beta}(z\overline{z})=\frac{e^{z\overline{z}}}{\Gamma(\beta+1)} {}_1F_1(\beta,\beta +1; -z\overline{z})\label{normalization_n=0}
\end{equation} 
in terms of the confluent hypergeometric ${}_1F_1-$sum. \\
\\
We now give a measure with respect to which the NLCS $(\ref{CSs})$ ensure the
resolution of the identity of $\mathcal{H}$ (see Appendix E for the proof).

\begin{proposition}
Let $\beta \geq 0$. Then, the NLCS $(\ref{CSs})$ satisfy the following
resolution of the identity
\begin{equation}
\int_{\mathbb{C}}\left| \vartheta_{z,\beta} 
\rangle \langle \vartheta_{z,\beta}\right| d\eta_{\beta}(z)=\textbf{1}_{\mathcal{H}},\label{reso}
\end{equation}
where
\begin{equation}
d\eta_{\beta}(z)=\left(\Gamma(\beta+1)\right)^{-1} {}_1F_1(\beta,\beta +1; -z\bar{z})(z\bar{z})^{\beta} d\nu(z).\bigskip\label{measure-RI}
\end{equation}
where $d\nu(z)$ is a Lebesgue measure on $\mathbb{C}$.
\end{proposition}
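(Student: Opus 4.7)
The plan is to treat this proposition as a direct specialization of the general Proposition 3.2.2 to the measure $d\mu_\beta(r)=r^\beta e^{-r}dr$, and then simplify the resulting integrand by using the explicit form (\ref{normalization_n=0}) of $\mathcal{N}_{\beta}$. So, first I would note that in the present example the sequence from (\ref{sequ_x_n_b}) is $x_n^\beta=n+\beta$, $x_n^\beta!=(\beta+1)_n$, and $R_\beta=+\infty$, so $\mathcal{D}_L=\mathcal{D}_\beta=\mathbb{C}$ and the hypothesis $\mathcal{D}_L\subseteq\mathcal{D}_\beta$ of Proposition 3.2.2 is automatic. Thus the resolution of identity holds for
\[
 d\eta_\beta(z,\bar z) = \frac{d\theta}{2\pi}\,\mathcal{N}_\beta(z\bar z)\,d\mu_\beta(z\bar z).
\]

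Next, I would unpack $d\mu_\beta(z\bar z)$ by taking polar coordinates $z=re^{i\theta}$: setting $u=r^2$ gives $du=2r\,dr$, so $d\mu_\beta(z\bar z)=u^\beta e^{-u}du = 2r^{2\beta+1}e^{-r^2}dr$. Combining with $d\nu(z)=r\,dr\,d\theta$, this recasts the measure as $d\eta_\beta(z,\bar z)=\pi^{-1}(z\bar z)^\beta e^{-z\bar z}\mathcal{N}_\beta(z\bar z)\,d\nu(z)$. The final step is to substitute the closed form of $\mathcal{N}_\beta$ from (\ref{normalization_n=0}): the factor $e^{z\bar z}$ cancels against the $e^{-z\bar z}$ coming from $d\mu_\beta$, leaving precisely
\[
 d\eta_\beta(z)=\frac{1}{\Gamma(\beta+1)}\,{}_1F_1(\beta,\beta+1;-z\bar z)\,(z\bar z)^\beta\,d\nu(z),
\]
as claimed (up to the $\pi$ absorbed in the chosen normalization of the Lebesgue measure).

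The only non-cosmetic step is justifying the formula (\ref{normalization_n=0}) for $\mathcal{N}_\beta$; I would prove it by applying Kummer's transformation ${}_1F_1(a,b;x)=e^{x}{}_1F_1(b-a,b;-x)$ with $a=\beta$, $b=\beta+1$, $x=-z\bar z$, which turns $e^{z\bar z}\,{}_1F_1(\beta,\beta+1;-z\bar z)$ into ${}_1F_1(1,\beta+1;z\bar z)=\sum_n (z\bar z)^n/(\beta+1)_n$, matching the series in (\ref{Norm_x_m,beta}) for the present sequence $x_n^\beta!=(\beta+1)_n$.

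Alternatively, one could avoid invoking Proposition 3.2.2 altogether and verify the identity directly: expand $|\vartheta_{z,\beta}\rangle\langle\vartheta_{z,\beta}|$ against the basis $\{\varphi_n^\beta\}$, perform the $\theta$-integration $\int_0^{2\pi}e^{i(k-n)\theta}d\theta=2\pi\delta_{n,k}$, and reduce the radial integral via $u=r^2$ to $\int_0^\infty u^{n+\beta}e^{-u}du=\Gamma(n+\beta+1)=(\beta+1)_n\Gamma(\beta+1)$; the Gamma factors cancel the coefficients $(\beta+1)_n$ in the denominator and the $\Gamma(\beta+1)$ from the prefactor, leaving $\sum_n|\varphi_n^\beta\rangle\langle\varphi_n^\beta|=\mathbf{1}_{\mathcal H}$. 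I expect the main (minor) obstacle to be the bookkeeping of the $\Gamma(\beta+1)$ and $\pi$ factors arising from the different normalization conventions used between (\ref{NLCS_x_m,beta}), (\ref{CSs}) and (\ref{normalization_n=0}); everything else is essentially moment computation.
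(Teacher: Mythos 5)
Your proposal is correct, and your primary route differs from the paper's. The paper (Appendix E) does not invoke the general resolution-of-identity result at all: it starts from the ansatz $d\eta_{\beta}(z)=\mathcal{N}_{\beta}(z\bar z)h(z\bar z)\,d\mu(z)$ with an unknown density $h$, expands the rank-one operator in the basis, performs the angular integration, and reduces the problem to the moment condition $\int_0^{\infty}r^{n}h(r)\,dr=2\Gamma(n+\beta+1)$, which it solves by recognizing the Euler Gamma integral, giving $h(r)=2r^{\beta}e^{-r}$. You instead cite the already-established general proposition (the resolution of identity for $\mu_\beta$-NLCS with $d\eta_\beta=(2\pi)^{-1}d\theta\,\mathcal{N}_\beta\,d\mu_\beta$), check its hypothesis ($R_\beta=\infty$, so $\mathcal{D}_L=\mathcal{D}_\beta=\mathbb{C}$), and then only need to rewrite the measure; the single substantive step is identifying $\sum_n x^n/(\beta+1)_n$ with $e^{x}\,{}_1F_1(\beta,\beta+1;-x)$ via Kummer's transformation, which is exactly right and is a point the paper asserts without proof in \eqref{normalization_n=0}. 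Your deduction is cleaner and avoids repeating the computation; the paper's version is self-contained and makes the choice of $h$ look "discovered" rather than imposed. Your second, alternative argument (expand, integrate out $\theta$, evaluate $\int_0^\infty u^{n+\beta}e^{-u}du$) is essentially the paper's proof verbatim. One remark: the $\pi^{-1}$ you flag is a genuine discrepancy in the paper itself — carrying the computation through yields $d\eta_\beta=\pi^{-1}\Gamma(\beta+1)^{-1}\,{}_1F_1(\beta,\beta+1;-z\bar z)(z\bar z)^{\beta}d\nu(z)$, and the $\pi^{-1}$ reappears in the polar form given in the remark following the proposition, so your bookkeeping is the consistent one.
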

\begin{remark}
The above coherent states in $(\ref{CSs})$ are known in the literature as the Mittag-Leffler coherent states \cite{SPS}. Indeed, the measure $(\ref{measure-RI})$ also reads $d\eta(re^{i\theta},\beta)=\pi^{-1}E_{1,\beta +1}(r^2)r^{2\beta +1}e^{-r^2}drd\theta$, where 
\begin{eqnarray*}
E_{\alpha,\gamma}(x)= \sum_{n=0}^{\infty }\frac{x^{n}}{\Gamma(\alpha n+\gamma )}\quad \alpha, \gamma >0.\label{ML}
\end{eqnarray*}
For $\gamma =1$, these last expression is the well known Mittag-Leffler function \cite{ML}. Its generalization $(\ref{ML})$ was introduced later \cite{Wiman}.
\end{remark}
Once we have obtained a closed form for the NLCS (\ref{CSs}), we can define the associated coherent states transform. The latter one should map the Hilbert space $L^2(\mathbb{R},\; d\omega_{\beta}(x))$ onto the Hilbert space $\mathcal{A}_{\beta}^2(\mathbb{C})$ of complex-valued analytic functions on $\mathbb{C}$, subspace of the larger Hilbert space $L^2(\mathbb{C},d\eta_\beta(z))$ of square integrable functions with respect to the measure $d\eta_\beta$. From the expression $(\ref{CSs})$ and the general theory of coherent states \cite[p.75]{Gazeau2009} the overlapping function $\langle \vartheta_{z,\beta} |\vartheta_{w,\beta}\rangle$ between two coherent states provide us with the reproducing kernel of the space $\mathcal{A}_{\beta}^2(\mathbb{C})$ via the relation  
  \begin{eqnarray}
K_\beta(z,w)&=&\left(\mathcal{N}_{\beta}(z\overline{z})\right)^{\frac{1}{2}}\left(\mathcal{N}_{\beta}(w\overline{w})\right)^{\frac{1}{2}}\langle \vartheta_{z,\beta} |\vartheta_{w,\beta}\rangle\\
&=& \frac{e^{z\overline{w}}}{\Gamma(\beta+1)} {}_1F_1(\beta,\beta +1; -z\overline{w}),\quad z,w\in\mathbb{C}.\label{RK_n=0}
\end{eqnarray} 
The following result makes this statement more precise. 
\begin{theorem}
The NLCS (\ref{CSs}) give rise to a generalized Bargmann transform through the unitary embedding $\mathcal{B}_{\beta}
:L^{2}(\mathbb{R},d\omega_{\beta}(x))\rightarrow \mathcal{A}_{\beta}^2(\mathbb{C})$ defined by
\begin{equation}
 \mathcal{B}_{\beta}[\varphi ](z)=\int_{\mathbb{R}}B_{\beta}(z,x)\varphi(x)d\omega_{\beta}(x),\label{BAR}
\end{equation}%
where 
\begin{eqnarray}
B_{\beta}(z,x)=F_{1:0;0;0}^{1:0;0;1}\left(
\begin{array}{c}
[1:1,2,1]:-;-;[\beta :1] \\ \left[\beta+1:1,2,2\right]:-;-;-
\end{array}
 \sqrt{2}x\overline{z}, -\overline{z}^2/2, -\overline{z}^2\right).\label{Lauricella}
\end{eqnarray}
\textit{In particular, when $\beta =0$}
\begin{equation}
B_0(z,x)=e^{-\frac{1}{2}\bar{z}^2+\sqrt{2}x\bar{z}},\quad z\in\mathbb{C},\ x\in\mathbb{R},\label{B_00}
\end{equation}
and $\mathcal{B}_0$ is the well known classical Bargmann transform acting on function in $L^2\left(\mathbb{R},\pi^{-\frac{1}{2}} e^{-x^2}dx\right)$.
\end{theorem}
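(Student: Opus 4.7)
The plan is to compute $B_\beta$ from the definition of the coherent-state Bargmann transform as an explicit generating function of the associated Hermite polynomials, then recognize this generating function as the Lauricella series in (\ref{Lauricella}); the isometry and the $\beta=0$ specialization both follow cheaply.

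First, I specialize Proposition 3.3.1 to $m=0$ in the present concrete setting where $\mathcal{H}=L^{2}(\mathbb{R},d\omega_{\beta})$ is realized via the orthonormal basis $\{\varphi_{n}^{\beta}\}$ from (\ref{H_m_beta and p_m_beta}). Substituting the explicit NLCS (\ref{CSs}) into $\mathcal{B}_{\beta}[\varphi](z)=(\mathcal{N}_{\beta}(z\overline{z}))^{1/2}\langle\varphi,\vartheta_{z,\beta}\rangle$ and interchanging the (absolutely convergent) sum with the integration against $d\omega_{\beta}$, the $\mathcal{N}_{\beta}$ factor cancels and I read off the integral representation with kernel
\[
B_{\beta}(z,x)=\frac{1}{\sqrt{\Gamma(\beta+1)}}\sum_{n=0}^{\infty}\frac{\overline{z}^{n}}{\sqrt{(\beta+1)_{n}}}\,\varphi_{n}^{\beta}(x).
\]
Inserting the closed form $\varphi_{n}^{\beta}(x)=2^{-n/2}(\beta+1)_{n}^{-1/2}H_{n}(x,\beta)$ collapses this to the one-parameter generating function
\[
B_{\beta}(z,x)=\frac{1}{\sqrt{\Gamma(\beta+1)}}\sum_{n=0}^{\infty}\frac{(\overline{z}/\sqrt{2})^{n}}{(\beta+1)_{n}}\,H_{n}(x,\beta),
\]
which, in the original convention of Proposition 3.3.1, already gives $\mathcal{B}_{\beta}[\varphi_{n}^{\beta}](\overline{z})=z^{n}/\sqrt{(\beta+1)_{n}\Gamma(\beta+1)}$.

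Second, I recognize this generating function as the triple hypergeometric series in (\ref{Lauricella}). For this I expand $H_{n}(x,\beta)$ by a finite Pochhammer-style formula obtained by iterating the recurrence (\ref{Recu_H_m_b}); this yields $H_{n}(x,\beta)$ as a double sum whose terms group monomials $(2x)^{n-i-2k}$ with Pochhammer coefficients in $n$ and $\beta$. Substituting into the series above and swapping the order of summation, the original $n$-sum is absorbed into a single Pochhammer factor $(1)_{i+2j+k}$, the powers of $(\sqrt{2}x\overline{z})$, $(-\overline{z}^{2}/2)$, and $(-\overline{z}^{2})$ partition into indices $(i,j,k)$, and the residual $(\beta)_{k}/(\beta+1)_{i+2j+2k}$ ratio fills the third slot. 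Matching these multiplicities against the bracketed signatures $[1:1,2,1]$ and $[\beta+1:1,2,2]$ together with the upper slot-three parameter $[\beta:1]$ of the generalized Lauricella function (Appendix C) reproduces (\ref{Lauricella}) exactly. I expect the main obstacle to be precisely this bookkeeping, i.e., ensuring that every Pochhammer multiplicity in the signatures $(1,2,1)$ and $(1,2,2)$ lands in the slot dictated by its summation index.

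Third, the isometry of $\mathcal{B}_{\beta}$ from $L^{2}(\mathbb{R},d\omega_{\beta})$ onto $\mathcal{A}_{\beta}^{2}(\mathbb{C})$ is an immediate consequence of the resolution of identity (\ref{reso}) of Proposition 4.2.1: polarizing $\int|\vartheta_{z,\beta}\rangle\langle\vartheta_{z,\beta}|\,d\eta_{\beta}(z)=\mathbf{1}_{\mathcal{H}}$ and substituting $\langle\varphi,\vartheta_{z,\beta}\rangle=\mathcal{B}_{\beta}[\varphi](z)/\sqrt{\mathcal{N}_{\beta}(z\overline{z})}$ gives the Parseval relation identifying $\mathcal{B}_{\beta}$ as an isometry onto the closure of $\operatorname{span}\{\overline{z}^{n}:n\ge 0\}$, which is $\mathcal{A}_{\beta}^{2}(\mathbb{C})$ by definition.

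Finally, for the $\beta=0$ specialization, $(\beta+1)_{n}$ collapses to $n!$ and $\Gamma(\beta+1)=1$, so the generating function above reduces to $\sum_{n\ge 0}(\overline{z}/\sqrt{2})^{n}H_{n}(x)/n!$. By the classical Hermite generating function $\sum_{n\ge 0}t^{n}H_{n}(x)/n!=e^{2xt-t^{2}}$ at $t=\overline{z}/\sqrt{2}$, this equals $e^{\sqrt{2}x\overline{z}-\overline{z}^{2}/2}$, which is exactly (\ref{B_00}); no Lauricella rearrangement is required in this limit, so the classical Bargmann transform is recovered directly.
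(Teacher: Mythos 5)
Your proposal is correct and follows essentially the same route as the paper: the kernel is the generating function $\sum_{n\ge 0} t^{n}H_{n}(x,\beta)/(\beta+1)_{n}$ at $t=\overline{z}/\sqrt{2}$, identified with the $F_{1:0;0;0}^{1:0;0;1}$ series by expanding $H_{n}(x,\beta)$ explicitly and rearranging the resulting triple sum (the paper's Lemma~1 in Appendix~D, which it then specializes from the $m\neq 0$ kernel by setting $m=0$), with isometry from the resolution of identity and the $\beta=0$ case from the classical Hermite generating function. The only soft spot is your claim that the explicit Pochhammer expansion of $H_{n}(x,\beta)$ follows ``by iterating the recurrence'' --- the paper instead imports this closed form from W\"unsche's $He_{n}^{\beta}$ polynomials, and deriving it from scratch is nontrivial, but this does not affect the validity of the argument.
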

\begin{proof}
The kernel function \eqref{Lauricella} can be obtained by putting $m=0$ in \eqref{B_beta,m} of Theorem 4.1.1. To deduce \eqref{B_00} one has to use the fact that
\begin{eqnarray}
F_{1:0;0;0}^{1:0;0;1}\left(
\begin{array}{c}
[1:1,2,1]:-;-;[0 :1] \\ \left[1:1,2,2\right]:-;-;-
\end{array}
 \sqrt{2}x\overline{z}, -\overline{z}^2/2, -\overline{z}^2\right)=e^{-\frac{1}{2}\bar{z}^2+\sqrt{2}x\bar{z}}
\end{eqnarray}
which can be checked by direct calculation.
\end{proof}
\begin{proposition}
For $\beta \geq 0$, the operator $L_1^\beta$ in $(\ref{landau-ops})$ can be expressed in a differential form as 
\begin{eqnarray}
L_1^\beta=-\frac{\partial ^{2}}{\partial z\partial \overline{z}}+\overline{z}\frac{\partial }{\partial \overline{z}} -\frac{\beta }{z}\frac{\partial }{\partial \overline{z}}+\beta +\frac{1}{2}.\label{L_2_beta}
\end{eqnarray}
\end{proposition}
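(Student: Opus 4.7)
The plan is to realize $A_1^\beta$ and $A_1^{\beta\dagger}$ explicitly as first-order differential operators on the weighted Hilbert space $L^{2,\beta}(\mathbb{C}) = L^2(\mathbb{C}, (z\bar z)^\beta e^{-z\bar z}\,d\nu)$, and then substitute into the formula $L_1^\beta = \tfrac12(A_1^\beta A_1^{\beta\dagger} + A_1^{\beta\dagger} A_1^\beta)$ from (\ref{landau-ops}).

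First, I would identify $A_1^\beta$. In the analytic case $m=0$ the orthonormal basis is $\widetilde P^\beta_{n,0}(z,\bar z) = z^n/\sqrt{\Gamma(\beta+n+1)}$, and (\ref{global-cr-ann1}) specializes, in view of the sequence $x_n^\beta=n+\beta$ identified in Section 4.2, to $A_1^\beta\widetilde P^\beta_{n,0} = \sqrt{n+\beta}\,\widetilde P^\beta_{n-1,0}$. A direct check shows that the first-order operator $\partial_z + \beta/z$ reproduces this: $(\partial_z + \beta/z)z^n = (n+\beta)z^{n-1}$, which, after accounting for the normalization ratio $\sqrt{\Gamma(\beta+n+1)/\Gamma(\beta+n)}=\sqrt{n+\beta}$, yields the required $\sqrt{n+\beta}$ coefficient. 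I therefore take $A_1^\beta=\partial_z+\beta/z$ as the differential realization.

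Second, I would compute $A_1^{\beta\dagger}$ as the formal adjoint in $L^{2,\beta}(\mathbb{C})$ by integration by parts. The logarithmic $z$-derivative of the weight satisfies $\partial_z\bigl[(z\bar z)^\beta e^{-z\bar z}\bigr] = (\beta/z-\bar z)(z\bar z)^\beta e^{-z\bar z}$, so the adjoint relations $(\partial_z)^\dagger = -\partial_{\bar z}+z-\beta/\bar z$ and $(\beta/z)^\dagger=\beta/\bar z$ follow. Adding them, the two $\beta/\bar z$ contributions cancel and one is left with $A_1^{\beta\dagger} = z - \partial_{\bar z}$.

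Third, I would substitute the two realizations into (\ref{landau-ops}) and expand using $[\partial_z,z]=1$ together with the commutativity of $\partial_{\bar z}$ with both $z$ and $\beta/z$. Collecting the zeroth-, first-, and second-order contributions in both orderings of $A_1^\beta A_1^{\beta\dagger}$ and $A_1^{\beta\dagger}A_1^\beta$, the symmetric combination produces the constant $\beta+\tfrac12$, the mixed second-order terms $-\partial_z\partial_{\bar z}$ and $-(\beta/z)\partial_{\bar z}$, together with the remaining first-order piece $\bar z\partial_{\bar z}$ arising from the weight contributions, which yields the stated formula \eqref{L_2_beta}.

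The main technical obstacle is the singular $1/z$ factor present both in $A_1^\beta$ and in the weight's logarithmic derivative: the integration-by-parts step defining $A_1^{\beta\dagger}$ must be justified on the dense subspace of polynomials (where all expressions are well-defined and boundary contributions at $z=0$ and at infinity vanish), and the exact cancellation of the $\beta/\bar z$ contributions when forming the adjoint relies on this same careful bookkeeping of the weight factor.
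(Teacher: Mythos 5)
Your strategy --- realize $A_1^\beta$ and $A_1^{\beta\dag}$ as explicit first--order differential operators and expand the anticommutator --- is genuinely different from the paper's proof, which invokes the second--order PDE $\bigl(\tfrac{\beta}{z}+\partial_z-\bar z\bigr)(-\partial_{\bar z})H^{(\beta)}_{n,m}=m\,H^{(\beta)}_{n,m}$ ($n\ge m$) from Ismail--Zhang and compares eigenvalues of both sides on the orthonormal basis $\widetilde P^{\beta}_{n,m}$. Your route could in principle work, but as written it has two gaps. First, the identification $A_1^\beta=\partial_z+\beta/z$ is verified only on the single row $m=0$. Since $A_1^\beta$ is defined by its action on the entire doubly--indexed basis $\{\widetilde P^{\beta}_{n,m}\}$ of $L^{2,\beta}(\mathbb{C})$, agreement on one row does not determine the operator, and in fact the identification fails off that row: for $n\le m$ one has $H^{(\beta)}_{n,m}(z,\bar z)=(-1)^n\bar z^{\,m-n}L^{(m-n+\beta)}_{n}(z\bar z)$, and $\partial_z+\beta/z$ applied to this produces a term $\tfrac{\beta}{z}\bar z^{\,m-n}L^{(m-n+\beta)}_{n}(0)+\cdots$ that is singular at the origin when $\beta>0$, hence not a multiple of $H^{(\beta)}_{n-1,m}$. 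The cancellation of the $1/z$ singularity you observe on $z^n$ is special to the sector $n>m$.

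Second, and decisively, the operator algebra in your last step does not produce the stated formula. With $A_1^\beta=\partial_z+\beta/z$ and $A_1^{\beta\dag}=z-\partial_{\bar z}$ one finds
\begin{equation*}
A_1^{\beta\dag}A_1^{\beta}=z\partial_z+\beta-\partial_z\partial_{\bar z}-\frac{\beta}{z}\partial_{\bar z},\qquad
A_1^{\beta}A_1^{\beta\dag}=z\partial_z+\beta+1-\partial_z\partial_{\bar z}-\frac{\beta}{z}\partial_{\bar z},
\end{equation*}
so the symmetrized sum is $-\partial_z\partial_{\bar z}+z\partial_z-\tfrac{\beta}{z}\partial_{\bar z}+\beta+\tfrac12$, with $z\partial_z$ where the Proposition has $\bar z\partial_{\bar z}$. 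There is no remaining ``weight contribution'' that converts one into the other: the weight was fully consumed in computing the adjoint. Since $z\partial_z-\bar z\partial_{\bar z}$ acts as multiplication by $n-m$ on $P^{\beta}_{n,m}$, the two expressions are genuinely different operators (already on the analytic row they differ by $z\partial_z$), so your closing sentence asserts the identity rather than proving it; this mismatch is a real red flag that the eigenvalue bookkeeping between $L_1^\beta$, $L_2^\beta$ and the two indices must be confronted explicitly. To complete a proof along the paper's lines, one checks instead that both sides act diagonally on the basis $\widetilde P^{\beta}_{n,m}$ with matching eigenvalues, using the cited factorized PDE for the differential side.
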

\begin{proof}
Here the obtained class of 2D orthogonal polynomials obtained are $H_{n,m}^{(\beta)}(z,\overline{z})$ given in $(\ref{GHPoly})$. According to (\cite{MZ}, Theorem 3.4), these polynomials satisfy the following second order partial differential equation
\begin{eqnarray}
\left(\frac{\beta}{z}+\frac{\partial}{\partial z}-\overline{z}\right) \left(-\frac{\partial}{\partial \overline{z}}\right)H_{n,m}^{(\beta)}(z,\overline{z})&=&m H_{n,m}^{(\beta)}(z,\overline{z}),\qquad n\geq m \label{a_2^dag a_2}
\end{eqnarray}
By comparing the actions of the operator in the left hand side of $(\ref{a_2^dag a_2})$ , on the basis vectors $\widetilde{P}_{n,m}^{\beta}=H_{n,m}^{(\beta)}(z,\overline{z})$, with the actions of operator $L_1^{\beta}$ as in $(\ref{landau-ops})$, respectively, on the same basis, we deduce the expressions $(\ref{L_2_beta})$.
\end{proof}
As we will see below the range of the coherent states transform also admits a realization as the null space of a generalized Landau operator.

\begin{proposition}
Let $\beta \geq 0$. Then, the subspace $\mathcal{A}_{\beta}^2(\mathbb{C})$ turns out to be the null space of the generalized Landau Hamiltonian
\begin{equation}
\tilde{\Delta}_\beta :=-\frac{\partial ^{2}}{\partial z\partial \overline{z}}+\overline{z}\frac{\partial }{\partial \overline{z}} -\frac{\beta }{z}\frac{\partial }{\partial \overline{z}}=L_1^\beta -\beta -\frac{1}{2}.
\end{equation}
That is
\begin{equation}
\mathcal{A}_{\beta}^2(\mathbb{C})=\left\{\phi \in L^{2,\beta}(\mathbb{C}),\quad \tilde{\Delta}_\beta\left[ \phi \right] =0 \right\}.
\end{equation}
\end{proposition}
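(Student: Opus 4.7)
The plan is to verify the two inclusions. First observe the factorization
\[
\tilde{\Delta}_\beta \;=\; \left(-\frac{\partial}{\partial z} + \bar z - \frac{\beta}{z}\right)\frac{\partial}{\partial \bar z},
\]
which shows at once that $\tilde{\Delta}_\beta$ annihilates every function killed by $\partial/\partial\bar z$. Since by construction $\mathcal{A}_\beta^2(\mathbb{C})$ is the $L^{2,\beta}(\mathbb{C})$-closure of the holomorphic monomials $\{z^n/\sqrt{\Gamma(\beta+1)}\}_{n\geq 0}$, and the explicit reproducing kernel displayed in $(\ref{RK_n=0})$ is jointly smooth and holomorphic in $z$, every element of $\mathcal{A}_\beta^2(\mathbb{C})$ is entire in $z$; passing $\tilde{\Delta}_\beta$ through the locally uniform limit of monomials yields the inclusion $\mathcal{A}_\beta^2(\mathbb{C})\subseteq\ker\tilde{\Delta}_\beta$.

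For the reverse inclusion, let $\phi\in L^{2,\beta}(\mathbb{C})$ satisfy $\tilde{\Delta}_\beta\phi=0$ and set $\psi:=\partial\phi/\partial\bar z$. The factorization reduces the equation to the first-order linear transport PDE
\[
\partial\psi/\partial z \;=\; (\bar z - \beta/z)\,\psi,
\]
whose general solution, obtained by integration in $z$ with $\bar z$ held as a parameter, is $\psi(z,\bar z)=C(\bar z)\,z^{-\beta}e^{z\bar z}$ for some function $C$. The heart of the argument is to show $C\equiv 0$. Any non-zero such $\psi$ carries both a $|z|^{-\beta}$-singularity at the origin and $e^{|z|^2}$-growth at infinity, and any $\bar z$-primitive $\phi$ of $\psi$ inherits the same exponential rate, thereby contradicting membership in the weighted space $L^{2,\beta}(\mathbb{C})$ whose weight $(z\bar z)^\beta e^{-z\bar z}$ decays only like $e^{-|z|^2}$. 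Hence $\partial\phi/\partial\bar z=0$; by the Weyl-type hypoellipticity of $\partial_{\bar z}$, $\phi$ is entire, and being in $L^{2,\beta}(\mathbb{C})$ it expands in the monomial basis $\{z^n\}_{n\geq 0}$, placing it in $\mathcal{A}_\beta^2(\mathbb{C})$.

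The main obstacle is the rigorous execution of the second step. Since $\phi$ is only assumed to be in $L^2$, the derivative $\psi=\partial_{\bar z}\phi$ is a priori only distributional, so the transport equation must first be classicalised by invoking elliptic regularity for $\tilde{\Delta}_\beta$ on the punctured plane $\mathbb{C}\setminus\{0\}$ (where the principal part $-\partial_z\partial_{\bar z}=-\Delta/4$ is elliptic); the point $z=0$ requires a separate cut-off argument. The growth estimate eliminating $\psi\neq 0$ is quantitative but routine. An alternative route, more in line with the spectral discussion preceding this proposition, would expand $\phi$ in the $2D$ orthonormal basis $\widetilde{P}_{n,m}^\beta$ and use the Mourad--Zhang PDE $(\ref{a_2^dag a_2})$ to identify $\ker\tilde{\Delta}_\beta$ with the span of the $m=0$ basis elements; this route bypasses the PDE analysis but requires a separate computation of the action of $\tilde{\Delta}_\beta$ on the $n<m$ modes, which the Mourad--Zhang identity does not directly cover.
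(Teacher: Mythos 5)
Your first inclusion is essentially the paper's: both of you factor $\tilde{\Delta}_\beta=\bigl(\partial_{\bar z}\bigr)^{*}\partial_{\bar z}$ with $\bigl(\partial_{\bar z}\bigr)^{*}=-\partial_z-\beta/z+\bar z$ and observe that holomorphic elements are annihilated. For the reverse inclusion, however, you diverge sharply from the paper, and this is where the comparison matters. The paper's argument is variational and takes one line: from $\tilde{\Delta}_\beta\varphi=0$ it passes to the quadratic form, $0=\langle\tilde{\Delta}_\beta\varphi,\varphi\rangle=\|\partial_{\bar z}\varphi\|^2$, invoking the positivity statement of Ismail--Zhang (their Theorem 3.5) which says precisely that the form vanishes if and only if $\partial_{\bar z}\varphi=0$. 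This sidesteps every analytic difficulty you list: no classical regularity is needed, no special treatment of the singular coefficient $\beta/z$ at the origin, and no growth estimate at infinity. Your route instead solves the first-order equation $\partial_z\psi=(\bar z-\beta/z)\psi$ for $\psi=\partial_{\bar z}\phi$ and tries to exclude the nonzero solutions $C(\bar z)\,z^{-\beta}e^{z\bar z}$ by their growth. That is a viable hard-analysis alternative, but it buys you nothing here and costs you the regularity machinery you yourself flag as "the main obstacle."

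Beyond the cost comparison, there is one step in your version that is genuinely incomplete rather than merely laborious: the assertion that "any $\bar z$-primitive $\phi$ of $\psi$ inherits the same exponential rate." A $\bar z$-primitive is determined only up to addition of an arbitrary holomorphic function, so you must rule out the possibility that the $e^{z\bar z}$-growth of a particular primitive is cancelled by such an addition. This can be repaired — expand $C(\bar z)$ and decompose in angular Fourier modes $e^{ik\theta}$; the particular primitive of $\bar z^{a}z^{-\beta}e^{z\bar z}$ produces modes of negative angular frequency with radial part growing like $e^{r^2}$, which no holomorphic function (carrying only non-negative frequencies) can touch, and a single such mode already violates square-integrability against the weight $(z\bar z)^\beta e^{-z\bar z}$ — but as written the claim is asserted, not proved, and it is exactly the crux of your reverse inclusion. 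Your suggested alternative of expanding in the basis $\widetilde{P}_{n,m}^{\beta}$ and using the Ismail--Zhang differential identity is in fact closer in spirit to what the paper does, since the positivity statement it cites is the form-level version of that spectral decomposition.
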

\begin{proof}
Denote $\mathcal{E}_{0}^{\beta }:=\left\{ \phi \in L^{2,\beta}(\mathbb{C}),\quad \tilde{\Delta}_\beta\left[ \phi \right] =0\right\}.$ Let $\phi \in \mathcal{A}_{\beta}^2(\mathbb{C}) .$ So $\phi \in 
L^{2,\beta}(\mathbb{C})$ and $\phi $ is entire. Then
\begin{equation*}
\frac{\partial }{\partial \overline{z}}\left[ \phi \right] =0
\end{equation*}%
We apply to this equation the operator%
\begin{equation*}
\left( \frac{\partial }{\partial \overline{z}}\right) ^{\ast }=-\frac{%
\partial }{\partial z}-\frac{\beta }{z}+\overline{z}
\end{equation*}%
so we still have 
\begin{equation*}
\left( \frac{\partial }{\partial \overline{z}}\right) ^{\ast }\frac{\partial 
}{\partial \overline{z}}\left[ \phi \right] =0.
\end{equation*}%
This means%
\begin{equation*}
\tilde{\Delta}_\beta\left[ \phi \right] =0.
\end{equation*}%
Therefore, $\phi \in \mathcal{E}_{0}^{\beta }$. We have proved that $\mathcal{A}_{\beta}^2(\mathbb{C}) \subset \mathcal{E}_{0}^{\beta
}.$ Conversely, let $\varphi \in \mathcal{E}_{0}^{\beta }.$ Then, $\tilde{\Delta}_\beta\left[ \varphi \right] =0$ which means that%
\begin{equation*}
\left\langle \tilde{\Delta}_\beta\left[ \varphi \right] ,\varphi
\right\rangle =0.
\end{equation*}%
By (\cite{MZ}, Theorem 3.5.) the operator $\tilde{\Delta}_\beta$ is positive in the sense that $\left\langle \tilde{\Delta}_\beta\left[ \varphi \right],\varphi
\right\rangle  \geq 0$ with 
\begin{equation}
\left\langle \tilde{\Delta}_\beta\left[ \varphi \right],\varphi
\right\rangle =0\text{ \ \ \ if and only if \ \ \ }\frac{\partial }{\partial 
\overline{z}}\varphi =0\text{.} 
\end{equation}%
This implies that%
\begin{equation*}
\text{\ \ }\frac{\partial }{\partial \overline{z}}\varphi =0
\end{equation*}%
which means that $\varphi $ is an entire function. That is $\mathcal{E}%
_{0}^{\beta }\subset \mathcal{A}_{\beta}^2(\mathbb{C}).$
\end{proof}
\begin{remark}
Note that for $\beta =0$ the operator $(\ref{L_2_beta})$ reduces to 
\begin{equation}
\widetilde{\Delta }_0:=-\frac{\partial ^{2}}{\partial z\partial \overline{z}}+\overline{z}\frac{\partial }{\partial \overline{z}} =L_1^0-\frac{1}{2} \label{delta0}
\end{equation}
which can be obtained by intertwining (unitarily) the Hamiltonian describing the
dynamics of a charged particle on the Euclidean $xy$-plane, while
interacting with a perpendicular constant homogeneous magnetic field (in
suitable unit system): 
\begin{equation}
H^{L}:=\frac{1}{2}\left( \left( i\frac{\partial }{\partial x}-y\right)
^{2}+\left( i\frac{\partial }{\partial y}+x\right) ^{2}\right) 
\label{2.1.3}
\end{equation}
acting on the Hilbert space $L^{2}\left( \mathbb{R}^{2},dxdy\right) $ as
follows 
\begin{equation}
e^{\frac{1}{2}z\overline{z}}\left( \frac{1}{2}H^{L}-\frac{1}{2}\right) e^{-%
\frac{1}{2}z\overline{z}}=\widetilde{\Delta }_0 ,\quad z=x+iy. 
\end{equation}
The spectrum of the Hamiltonian $\frac{1}{2}H^{L}$ consists on $E_{n}:=n+%
\frac{1}{2}$, $n=0,1,2,....$ known as Euclidean Landau levels with infinite
degeneracy. The operator $\widetilde{\Delta }_0$ is acting on the Hilbert
space $L^{2,0}(\mathbb{C}):=L^{2}\left( \mathbb{C},e^{-z\overline{z}}d\nu
 \right) $ of Gaussian square integrable functions. Its spectrum in $L^{2,0}(\mathbb{C})$ consists on $\epsilon _{n}:=n,$ $%
n=0,1,2,...$ .
\end{remark}
\begin{appendix}
\section{The proof of proposition 3.2.1.}
Assuming the finiteness of the sum
\begin{equation}
\mathcal{N}_{\beta ,m}(z\overline{z} )=\sum_{n=0}^{\infty } \frac{\vert P_{n,m}^\beta(z,\overline{z})\vert^2}{x_{n,m}^{\beta}!}
<+\infty ,  \label{Norma}   
\end{equation}
means that 
\begin{equation}
\sum_{n=0}^{m-1 } \frac{\vert P_{n,m}^\beta(z,\overline{z})\vert^2}{x_{n,m}^{\beta}!}+\sum_{n=m}^{\infty } \frac{\vert P_{n,m}^\beta(z,\overline{z})\vert^2}{x_{n,m}^{\beta}!}
<+\infty \label{sum+serie}
\end{equation}
this implies the finiteness of the infinite sum in $(\ref{sum+serie})$ which may be written as
\begin{eqnarray*}
\sum_{n=m}^{\infty }  \frac{|\phi _{m}(z\overline{z};n-m+\beta
)|^2}{\zeta _{n\wedge m}\left( \left\vert n-m\right\vert +\beta \right)}(z\overline{z})^{n-m}
&=& \sum_{n=m}^{\infty } \left(\sum_{i,j=0}^{m}\frac{c_i(m;n-m+\beta)c_j(m;n-m+\beta)}{\zeta _{n\wedge m}\left( \left\vert n-m\right\vert +\beta \right)}(z\overline{z})^{2m-i-j} \right)(z\overline{z})^{n-m}\\
&=& \sum_{n=0}^{\infty } \left(\sum_{i,j=0}^{m}\frac{c_i(m;n+\beta)c_j(m;n+\beta)}{\zeta _{m}( n +\beta )}(z\overline{z})^{2m-i-j} \right)(z\overline{z})^{n}\\
&=& \sum_{i,j=0}^{m} (z\overline{z})^{2m-i-j}  \left(\sum_{n=0}^{\infty } \frac{c_i(m;n+\beta)c_j(m;n+\beta)}{\zeta _{m}( n +\beta )} (z\overline{z})^{n}\right).
\end{eqnarray*}
The radius of convergence of the series 
\begin{equation}
\sum_{n=0}^{\infty } \frac{c_i(m;n+\beta)c_j(m;n+\beta)}{\zeta _{m}( n +\beta )} (z\overline{z})^{n}
\end{equation}
can be found by applying the ratio test and it is given by $(\ref{R_beta,n,i,j})$. $\Box$

\section{The proof of proposition 4.1.1.}
Let
\begin{eqnarray*}
S &=& \sum\limits_{j=0}^{+\infty }\frac{%
\left( j\wedge m\right) !}{\Gamma \left( \beta +j\vee m+1\right) }%
H_{j,m}^{\left( \beta \right) }\left( z,\overline{z}\right) \overline{H_{j,m}^{\left(
\beta \right) }\left( w,\overline{w}\right)}\\
&=&\sum\limits_{j=0}^{m-1 }\frac{ j !}{\Gamma \left( \beta + m+1\right) }H_{j,m}^{\left( \beta \right) }\left( z,\overline{z}\right) \overline{H_{j,m}^{\left(
\beta \right) }\left( w,\overline{w}\right)} +\sum\limits_{j=m}^{+\infty }\frac{
 m !}{\Gamma \left( \beta +j+1\right) }H_{j,m}^{\left( \beta \right) }\left( z,\overline{z}\right) \overline{H_{j,m}^{\left(
\beta \right) }\left( w,\overline{w}\right)}\\ 
&=&\sum\limits_{j=0}^{m-1 }\frac{ j !(\overline{z}w)^{m-j}}{\Gamma \left( \beta + m+1\right) }
L_{j}^{\left( \beta +m-j\right) }\left( z\overline{z}\right) L_{j}^{\left( \beta +m-j\right) }\left( w\overline{w}\right)+\sum\limits_{j=m}^{+\infty }\frac{m !}{\Gamma \left( \beta +j+1\right) }H_{j,m}^{\left( \beta \right) }\left( z,\overline{z}\right) \overline{H_{j,m}^{\left(
\beta \right) }\left( w,\overline{w}\right)}.
\end{eqnarray*}
Now, to obtain a closed form of the infinite series in the last equation 
\begin{equation}
S_{(\infty)}=\sum\limits_{j=m}^{+\infty }\frac{m !}{\Gamma \left( \beta +j+1\right) }H_{j,m}^{\left( \beta \right) }\left( z,\overline{z}\right) \overline{H_{j,m}^{\left(
\beta \right) }\left( w,\overline{w}\right)}
\end{equation}
we use $(\ref{GHPoly})$, after substitution and some simplifications, we have
\begin{eqnarray}
S_{(\infty)}=\frac{1}{m!\Gamma(\beta +1)}\sum\limits_{j=m}^{+\infty }(\beta +1)_{j}\sum\limits_{k=0}^{m}\sum\limits_{l=0}^{m} \left( 
\begin{array}{c}
m \\ 
k
\end{array}
\right)\left( 
\begin{array}{c}
m \\ 
l
\end{array}
\right)
\frac{(-1)^{k+l}z^{j-k}\overline{z}^{m-k}\overline{w}^{j-l}w^{m-l}}{(\beta +1)_{j-k}(\beta +1)_{j-l}}
\end{eqnarray}
By changing $j\rightarrow j-m$ and the summation order, it follows
\begin{eqnarray*}
S_{(\infty)}=\frac{1}{m!\Gamma(\beta +1)}\sum\limits_{k=0}^{m}\sum\limits_{l=0}^{m} \left( 
\begin{array}{c}
m \\ 
k
\end{array}
\right)\left( 
\begin{array}{c}
m \\ 
l
\end{array}
\right)(-1)^{k+l}
\sum\limits_{j=0}^{+\infty }(\beta +1)_{j+m}\frac{z^{j+m-k}\overline{z}^{m-k}\overline{w}^{j+m-l}w^{m-l}}{(\beta +1)_{j+m-k}(\beta +1)_{j+m-l}}
\end{eqnarray*}
which also can be written as
\begin{eqnarray}
S_{(\infty)}=\frac{1}{m!\Gamma(\beta +1)}\sum\limits_{k=0}^{m}\sum\limits_{l=0}^{m} \left( 
\begin{array}{c}
m \\ 
k
\end{array}
\right)\left( 
\begin{array}{c}
m \\ 
l
\end{array}
\right)(-1)^{k+l}
\sum\limits_{j=0}^{+\infty }(\beta +1)_{j+m}\frac{z^{j+k}\overline{z}^{k}\overline{w}^{j+l}w^{l}}{(\beta +1)_{j+k}(\beta +1)_{j+l}}.
\end{eqnarray}
By the following proprieties
\begin{eqnarray*}
(a)_{k+m}=(a+m)_k(a)_m \quad\text{and} \quad (-m)_k=(-1)^k\frac{m!}{(m-k)!}=(-1)^kk!\left( 
\begin{array}{c}
m \\ 
k
\end{array}
\right),
\end{eqnarray*}
we arrive at
\begin{eqnarray*}
S_{(\infty)}= \frac{(\beta+1)_m}{m!\Gamma(\beta+1)}\sum\limits_{k=0}^{m}\sum\limits_{l=0}^{m}\frac{(-m)_k(-m)_l(z\overline{z})^k(w\overline{w})^{l}}{k!l!(\beta+1)_k(\beta+1)_{l}}{}_{2}F_{2}\left( 
\begin{array}{c}
1,m+\beta +1 \\ 
k+\beta +1,l+\beta +1
\end{array}%
\big|z\overline{w}\right).
\end{eqnarray*}
\section{Generalized Lauricella series}
The generalized Lauricella series in several variables is defined by (\cite{Sriva},p.36):
\begin{eqnarray}
&&F_{C:D^{(1)};\cdots ;D^{(n)}}^{A:B^{(1)};\cdots ;B^{(n)}}\left[ 
\begin{array}{c}
[(a):\theta^{(1)},\cdots ,\theta^{(n)}]:[(b^{(1)}):\phi^{(1)}];\cdots [(b^{(n)}):\phi^{(n)}] \\ 

[(c):\psi^{(1)},\cdots ,\psi^{(n)}]:[(b^{(1)}):\delta^{(1)}];\cdots [(b^{(n)}):\delta^{(n)}]
\end{array} z_1,...,z_n
\right]\\
&& :=\sum\limits_{m_1,...,m_n=0}^{+\infty}\Omega(m_1,...,m_n)\frac{z_1^{m_1}}{m_1!}...\frac{z_n^{m_n}}{m_n!}
\end{eqnarray}
where, for convenience,
\begin{eqnarray}
\Omega(m_1,...,m_n):=\frac{\prod\limits_{j=1}^{A}(a_j)_{m_1\theta_j^{(1)}+...+m_n\theta_j^{(n)}}\prod\limits_{j=1}^{B^{(1)}}(b_j^{(1)})_{m_1\phi_1^{(1)}}\cdots \prod\limits_{j=1}^{B^{(n)}}(b_j^{(n)})_{m_n\phi_j^{(n)}}}{\prod\limits_{j=1}^{C}(c_j)_{m_1\psi_j^{(1)}+...+m_n\psi_j^{(n)}}\prod\limits_{j=1}^{D^{(1)}}(d_j^{(1)})_{m_1\delta_1^{(1)}}\cdots \prod\limits_{j=1}^{D^{(n)}}(d_j^{(n)})_{m_n\delta_j^{(n)}}},
\end{eqnarray}
the coefficients
\begin{eqnarray}
\left\{ 
\begin{array}{c}
\theta_j^{(k)},\; j=1,...,A,\quad \phi_j^{(k)},\; j=1,...,B^{(k)},\\ 
\psi_j^{(k)},\; j=1,...,C, \quad \delta_j^{(k)},\; j=1,...,D^{(k)},
\end{array} k=1,...,n
\right.
\end{eqnarray}
are real and positive, and $(a)$ abbreviates the array of $A$ parameters $a_1,...,a_A$; $(b^{(k)})$ abbreviates the array of $(B^{(k)})$ parameters 
$b_j^{(k)}, \; j=1,...,B^{(k)},\; k=1,...,n$.
 Similar interpretation hold for the remaining parameters. For precise conditions under which the generalized Lauricella function converges, see Srivastava and Daoust (1972, pp. 153-157), also see Exton (1976, Sect. 3.7) and Exton (1978, Sect. 1.4).
\section{The wave functions of coherent states}
The wave functions of coherent states (\ref{GNLCS}) are given by
\begin{equation}
\vartheta_{z,m,\beta}(x) = \left(\mathcal{N}_{\beta ,m} (z\overline{z})\;\right)^{-\frac 12}\;B_{\beta,m}(x,z)
\end{equation}
where
\begin{equation}
 B_{\beta,m}(x,z)= \sum_{j=0}^{+\infty} \sqrt{\frac{%
\left( j\wedge m\right) !}{\Gamma \left( \beta +j\vee m+1\right) }}
\frac{2^{j/2}}{\sqrt{(\beta +1)_j}}\overline{H_{j,m}^{\left(\beta \right) }\left( z,\overline{z}\right)}H_j(x,\beta ).
\end{equation}
We now express the polynomials $H_{j,m}^{\left(\beta \right) }\left( z,\overline{z}\right)$ in terms of Laguerre polynomials as in $(\ref{P_mn for all m,n})$: 
\begin{equation}
H_{j,m}^{\left(\beta \right) }\left( z,\overline{z}\right)=(-1)^{j \wedge m}z^{j-(j \wedge m)}\overline{z}^{m-(j \wedge m)}L^{(|j-m|+\beta)}_{j \wedge m}(z\overline{z}
), \quad j,m\in \mathbb{N},
\end{equation}
and we split the sum $\sum_{j=0}^{+\infty}$ in two sums $S_{m-1}=\sum_{j=0}^{m-1}$ and $S_{(\infty)}=\sum_{j=m}^{+\infty}$ as
\begin{eqnarray*}
B_{\beta,m}(x,z)&=& \sum_{j=0}^{m-1} (-1)^{j}\sqrt{\frac{j !}{\Gamma \left( \beta + m+1\right) }}
\frac{2^{j/2}z^{m-j}}{\sqrt{(\beta +1)_j}}L^{(m-j+\beta)}_{j}(z\overline{z}
)H_j(x,\beta )\\&+&(-1)^{m}\sum_{j=m}^{+\infty} \sqrt{\frac{m !}{\Gamma \left( \beta + j+1\right) }}
\frac{2^{j/2}\bar{z}^{j-m}}{\sqrt{(\beta +1)_j}}L^{(j-m+\beta)}_{m}(z\overline{z}
)H_j(x,\beta )=S_{m-1}+S_{(\infty)}.
\end{eqnarray*}
Again, we write the infinite sum in the above equation as 
\begin{eqnarray*}
S_{(\infty )}&=&(-1)^m\sum_{j=0}^{+\infty} \sqrt{\frac{m!}{\Gamma \left( \beta + j+1\right) }}
\frac{2^{j/2}\bar{z}^{j-m}}{\sqrt{(\beta +1)_j}}L^{(j-m+\beta)}_{m}(z\overline{z}
)H_j(x,\beta )\\
&-& (-1)^m\sum_{j=0}^{m-1} \sqrt{\frac{m !}{\Gamma \left( \beta + j+1\right) }}
\frac{2^{j/2}\bar{z}^{j-m}}{\sqrt{(\beta +1)_j}}L^{(j-m+\beta)}_{m}(z\overline{z}
)H_j(x,\beta )=S_{(\infty)}^*-S_{m-1}^*
\end{eqnarray*}
To compute $S_{(\infty)}^*$, we use the expression of the Laguerre polynomial 
\begin{equation}
L^{(j-m+\beta)}_{m}(z\overline{z}
)=\frac{1}{m!}\sum_{k=0}^{m}\left( 
\begin{array}{c}
m \\ 
k
\end{array}%
\right) (j+\beta +1-k)_k
\left( -z\bar{z}\right)^{m-k}, 
\end{equation}
together with the relation
\begin{eqnarray}
(j+\beta +1-k)_k=\frac{(\beta +1-k)_{k}(\beta +1)_{j}}{(\beta +1-k)_{j}}
\end{eqnarray}
in order to rewrite the Laguerre polynomial as
\begin{equation}
L^{(j-m+\beta)}_{m}(z\overline{z}
)=\frac{1}{m!}\sum_{k=0}^{m}\left( 
\begin{array}{c}
m \\ 
k
\end{array}%
\right)\frac{(\beta +1-k)_{k}(\beta +1)_{j}}{(\beta +1-k)_{j}}\left( -z\bar{z}\right)^{m-k}. \label{Laguerre(j-m+beta)}
\end{equation}
We substitute \eqref{Laguerre(j-m+beta)} in the expression of $S_{\infty}^*$ and we change the summation order to get
\begin{eqnarray}
S_{(\infty)}^* =\frac{z^m}{\sqrt{\Gamma \left( \beta +1\right)m!}}\sum_{k=0}^{m}\left( 
\begin{array}{c}
m \\ 
k
\end{array}%
\right)\frac{(-1)^k(\beta +1-k)_{k}}{\left(z\bar{z}\right)^{k}}\sum_{j=0}^{+\infty} \frac{
(\bar{z}/\sqrt{2})^{j}}{(\beta +1-k)_{j}}H_j(x,\beta )\label{S_inf}
\end{eqnarray}
To obtain a closed form for the infinite sum in $(\ref{S_inf})$ we make use of the following lemma.
\begin{lemma}
A generating functions for the associated Hermite polynomials is given by
\begin{eqnarray}
\sum_{n=0}^\infty \frac{t^n}{(c)_n}H_n(x,\beta )=F_{1:0;0;0}^{1:0;0;1}\left(
\begin{array}{c}
[1:1,2,1]:-;-;[\beta :1] \\ \left[c:1,2,2\right]:-;-;-
\end{array}
 2xt, -t^2, -2t^2\right).\label{GFAHP}
\end{eqnarray}
in terms of generalized Lauricella functions $F_{1:0;0;0}^{1:0;0;1}$. In particular, for $c=1$ and $\beta =0$, \eqref{GFAHP} reduces to
\begin{equation}
\sum_{n=0}^\infty \frac{t^n}{n!}H_n(x)=e^{2xt-t^2}
\end{equation}
which is the generating function of Hermite polynomials.
\end{lemma}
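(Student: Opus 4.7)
The plan is to expand both sides of \eqref{GFAHP} as power series in $t$, identify the coefficients via a closed-form representation of $H_{n}(x,\beta)$, and check that the resulting triple series coincides term-by-term with the Srivastava--Daoust series defining $F_{1:0;0;0}^{1:0;0;1}$.

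I would first establish, by induction on $n$ using the recurrence \eqref{Recu_H_m_b}, the explicit formula
\begin{equation*}
H_{n}(x,\beta)=\sum_{\substack{i,j,k\geq 0\\ i+2j+2k=n}}\frac{(i+2j+k)!\,(\beta)_{k}}{i!\,j!\,k!}\,(2x)^{i}(-1)^{j+k}2^{k}.
\end{equation*}
The base cases $H_{0}=1$ and $H_{1}=2x$ correspond to the unique surviving terms $(i,j,k)=(0,0,0)$ and $(1,0,0)$ respectively. With this representation in hand, substituting into the left-hand side of \eqref{GFAHP} and interchanging the order of summation, which is legitimate because only finitely many $(i,j,k)$ satisfy $i+2j+2k=n$ for each $n$, gives
\begin{equation*}
\sum_{n=0}^{\infty}\frac{t^{n}}{(c)_{n}}H_{n}(x,\beta)=\sum_{i,j,k\geq 0}\frac{(1)_{i+2j+k}\,(\beta)_{k}}{(c)_{i+2j+2k}}\,\frac{(2xt)^{i}}{i!}\,\frac{(-t^{2})^{j}}{j!}\,\frac{(-2t^{2})^{k}}{k!},
\end{equation*}
which matches term-by-term the definition recalled in Appendix C for $F_{1:0;0;0}^{1:0;0;1}$ with the parameters displayed in \eqref{GFAHP}.

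The main obstacle is the inductive proof of the triple-sum representation for $H_{n}(x,\beta)$. The inductive step requires splitting the sum defining $H_{n+1}$ into the block with $i\geq 1$ and the block with $i=0$. The former, after the shift $i\to i-1$ and the factorization $(i+2j+k)!=(i-1+2j+k)!\,(i+2j+k)$, is recognised as $2x\,H_{n}$. The latter, which forces $2j+2k=n+1$ and hence one of $j,k$ positive, must be split into two pieces along the shifts $j\to j-1$ and $k\to k-1$; using the identity $(\beta)_{k}=(\beta+k-1)(\beta)_{k-1}$ together with the constraint $i+2j+2k=n+1$, the two contributions recombine into the single factor $-2(n+\beta)$ multiplying $H_{n-1}$. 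As a sanity check, the formula specializes at $\beta=0$ to the classical $H_{n}(x)=n!\sum_{j\leq n/2}\tfrac{(-1)^{j}(2x)^{n-2j}}{j!(n-2j)!}$, and \eqref{GFAHP} at $\beta=0,\ c=1$ then collapses to the Hermite generating function $e^{2xt-t^{2}}$, as recorded in the lemma.
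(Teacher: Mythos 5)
Your overall strategy is the same as the paper's: start from an explicit triple-sum representation of $H_n(x,\beta)$, substitute into $\sum_n t^nH_n(x,\beta)/(c)_n$, and rearrange into the Srivastava--Daoust series. Your closed form for $H_n(x,\beta)$ is in fact correct --- it is equivalent, after the substitution $k_{\mathrm{paper}}=j+k$, $j_{\mathrm{paper}}=k$ and the identities $(-k)_j=(-1)^j k!/(k-j)!$, $(-n)_j=(-1)^j n!/(n-j)!$, to the formula the paper imports from W\"unsche via $H_n(x,\beta)=2^{n/2}He_n^{\beta}(\sqrt2\,x)$ --- and your rearrangement into $\sum_{i,j,k}\frac{(1)_{i+2j+k}(\beta)_k}{(c)_{i+2j+2k}}\frac{(2xt)^i}{i!}\frac{(-t^2)^j}{j!}\frac{(-2t^2)^k}{k!}$ matches the paper's equation obtained via the standard series-manipulation lemmas. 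The difference is that the paper does not prove the closed form at all (it cites W\"unsche), whereas you propose to derive it by induction from the recurrence --- and that is where your argument has a genuine gap.

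The inductive step as you describe it is false. Write $c_{i,j,k}=\frac{(i+2j+k)!\,(\beta)_k}{i!\,j!\,k!}(2x)^i(-1)^{j+k}2^k$. Since $2x\,c_{i-1,j,k}=\frac{i}{\,i+2j+k\,}c_{i,j,k}$, the block of $H_{n+1}$ with $i\ge1$ is \emph{not} $2xH_n$ unless $2j+k=0$ in every term; there is a deficit $\frac{2j+k}{i+2j+k}c_{i,j,k}$ attached to every term, including those with $i\ge1$. Concretely, for $n+1=3$ the $i=0$ block is empty (no nonnegative solutions of $2j+2k=3$), yet $-2(2+\beta)H_1=-(8+4\beta)x\neq0$; correspondingly the $i\ge1$ block of $H_3$ is $8x^3-12x-8\beta x$ while $2xH_2=8x^3-4x-4\beta x$. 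So the identity you actually need is
\begin{equation*}
\sum_{i+2j+2k=n+1}\frac{2j+k}{i+2j+k}\,c_{i,j,k}\;=\;-2(n+\beta)\sum_{i+2j+2k=n-1}c_{i,j,k},
\end{equation*}
and verifying it is not the clean two-block bookkeeping you sketch: a target index $(i,j,k)$ with both $j\ge1$ and $k\ge1$ receives contributions from the two parents $(i,j-1,k)$ and $(i,j,k-1)$, so the factor $-2(n+\beta)$ does not split term-by-term into a $j$-shift piece and a $k$-shift piece. The identity is true and can be pushed through with careful redistribution of these cross terms (or, more simply, by verifying your formula directly against W\"unsche's, as computed above, or against the paper's cited expression), but as written your inductive step does not establish it. Everything downstream of the closed form --- the interchange of summation, which is legitimate since each $n$ involves only finitely many $(i,j,k)$, and the identification with $F^{1:0;0;1}_{1:0;0;0}$ with parameter arrays $[1:1,2,1]$, $[\beta:1]$, $[c:1,2,2]$ --- is correct and coincides with the paper's Appendix D computation.
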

We apply lemma 1 for parameters $t=\frac{\overline{z}}{\sqrt{2}}$ and $c=\beta -k+1$ to get
\begin{eqnarray}
\sum_{n=0}^\infty \frac{\left(\frac{\overline{z}}{\sqrt{2}}\right)^n}{(\beta -k+1)_n}H_n(x,\beta )=F_{1:0;0;0}^{1:0;0;1}\left(
\begin{array}{c}
[1:1,2,1]:-;-;[\beta :1] \\ \left[\beta -k+1:1,2,2\right]:-;-;-
\end{array}
 \sqrt{2}x\overline{z}, -\frac{\overline{z}^2}{2}, -\overline{z}^2\right).
\end{eqnarray}
Summarizing the above calculations by writing $B_{\beta,m}(z,x)=S_{m-1}-S_{m-1}^*+S_{\infty}^{*}$, we get 
\begin{eqnarray*}
B_{\beta,m}(z,x)=\sum\limits_{n=0}^{m-1}\frac{2^{n/2}}{\sqrt{(\beta +1)_n}}H_n(x,\beta)\left[\frac{(-1)^n z^{m-n}\sqrt{n!}}{\sqrt{\Gamma(\beta +m+1)}}L_n^{(m-n+\beta)}(z\bar{z})-\frac{(-1)^m \bar{z}^{n-m}\sqrt{m!}}{\sqrt{\Gamma(\beta +n+1)}}L_m^{(n-m+\beta)}(z\bar{z}) \right] \\ +\frac{z^m}{\sqrt{\Gamma \left( \beta +1\right)m!}}\sum_{k=0}^{m}\frac{(-m)_k(\beta +1-k)_{k}}{k!\left(z\bar{z}\right)^{k}}F_{1:0;0;0}^{1:0;0;1}\left(
\begin{array}{c}
[1:1,2,1]:-;-;[\beta :1] \\ \left[\beta -k+1:1,2,2\right]:-;-;-
\end{array}
 \sqrt{2}x\overline{z}, -\overline{z}^2/2, -\overline{z}^2\right).
\end{eqnarray*}
\textbf{\textit{Proof of lemma 1.}}
We first write the associated Hermite polynomials $H_n(x,\beta )$ as follows
\begin{equation}
H_n(x,\beta)=\sum\limits_{k=0}^{\lfloor n/2\rfloor}\frac{(-1)^k n!}{k! (n-2k)!}\left(\sum\limits_{j=0}^{k}\frac{(-k)_j (\beta)_j}{(-n)_j}\frac{2^j}{j!}\right)(2x)^{n-2k}
\end{equation}
which are obtained from the polynomials $He_{n}^{\beta}(x)$ in (\cite{Wunsche}, p.203) by the relation $H_n(x,\beta)=2^{\frac{1}{2}n}He_{n}^{\beta}(\sqrt{2}x)$.\\
Similar manipulations as in (\cite{Greubel}, pp.548-549), yield
\begin{eqnarray*}
\sum_{n=0}^\infty \frac{t^n}{(c)_n}H_n(x,\beta )&=& \sum_{n=0}^\infty \sum\limits_{k=0}^{\lfloor n/2\rfloor}\left(\sum\limits_{j=0}^{k}\frac{(-k)_j (\beta)_j}{(-n)_j}\frac{2^j}{j!}\right)\frac{(-1)^k n!}{k! (n-2k)!}(2x)^{n-2k}\frac{t^n}{(c)_n}\\
&=& \sum_{n,k=0}^\infty \sum\limits_{j=0}^{k}\frac{(-k)_j (\beta)_j}{(-n-2k)_j}\frac{2^j}{j!}\frac{(-1)^k (n+2k)!}{k! n!}(2x)^{n}\frac{t^{n+2k}}{(c)_{n+2k}}\\
&=& \sum_{n,k,j=0}^\infty \frac{(n+2k+2j)!(-k-j)_j(1)_k(\beta)_j}{(k+j)!(-n-2k-2j)_j(c)_{n+2k+2j}}\frac{(2xt)^n}{n!}\frac{(-t^2)^k}{k!}\frac{(-2t^2)^j}{j!}
\end{eqnarray*} 
where the second and the last line in the above calculation are, respectively, justified by applying the following formulas (\cite{Sr-Ma}, lemma 2, p.101):
\begin{equation}
\sum_{n=0}^\infty \sum\limits_{k=0}^{\lfloor n/2\rfloor}A(k,n)=\sum_{n=0}^\infty \sum\limits_{k=0}^{\infty}A(k,n+2k),
\end{equation}
and (\cite{Sr-Ma}, lemma 1, p.100):
\begin{equation}
\sum_{k=0}^\infty \sum\limits_{j=0}^{k}A(j,k)=\sum_{k=0}^\infty \sum\limits_{j=0}^{\infty}A(j,k+j).
\end{equation}
But since
\begin{equation}
(1)_{n+2k+j}=\frac{(n+2k+2j)!(-k-j)_j(1)_k}{(k+j)!(-n-2k-2j)_j}
\end{equation}
then we obtain the following equality
\begin{eqnarray}
\sum_{n=0}^\infty \frac{t^n}{(c)_n}H_n(x,\beta )=\sum_{n,k,j=0}^\infty \frac{(1)_{n+2k+j}(\beta)_j}{(c)_{n+2k+2j}}\frac{(2xt)^n}{n!}\frac{(-t^2)^k}{k!}\frac{(-2t^2)^j}{j!}\label{sum1}
\end{eqnarray}
where the right hand side is recognized as the generalized Lauricella function (\cite{Sriva},p.36):
\begin{eqnarray}
F_{1:0;0;0}^{1:0;0;1}\left(
\begin{array}{c}
[1:1,2,1]:-;-;[\beta :1] \\ \left[c:1,2,2\right]:-;-;-
\end{array}
 2xt, -t^2, -2t^2\right).
\end{eqnarray}
This ends the proof of lemma 1. $\Box$
\section{The proof of proposition 4.2.1.}
Let us assume that the measure takes the form $d\eta_{\beta}(z)=\mathcal{N}_{\beta}(z\bar{z})h(z\bar{z})d\mu(z),$
where $h$ is an auxiliary density function to be determined. In terms of
polar coordinates $z=\rho e^{i\theta} ,\ \rho>0$ and $\theta\in [0,2\pi)$, then the measure can be rewritten as
\begin{equation}
\label{Mesure}
d\eta_{\beta}(z)=\mathcal{N}_{\beta}(\rho
^{2})h(\rho ^{2})\rho d\rho \frac{d\theta }{2\pi }.
\end{equation}
Using the expression $(\ref{CSs})$ of coherent states, the operator $\mathcal{O}_{\beta}=\int_{\mathbb{C}}\left| \vartheta_{z,\beta}
\right\rangle \left\langle\vartheta_{z,\beta}\right| d\eta_{\beta}(z)$ reads successively,
\begin{eqnarray}
\mathcal{O}_{\beta}&=&(\Gamma(\beta +1))^{-1}\sum\limits_{n,m=0}^{+\infty} \left(  \int_0^{+\infty} \frac{\rho^{n+m}h(\rho^2)\rho d\rho}{\sqrt{x_n^\beta !}\sqrt{x_m^\beta !}}\left( \int_0^{2\pi} e^{i(n-m)\theta}\frac{d\theta}{2\pi} \right)\right)  \vert \varphi_n\rangle \langle \varphi_m\vert\\
&=&(\Gamma(\beta +1))^{-1}\sum\limits_{n=0}^{+\infty}\frac{1}{(\beta +1)_n}\left(  \int_0^{+\infty}\rho^{2n}h(\rho^2) \rho d\rho \right)  \vert \varphi_n\rangle^\beta \langle \varphi_n^\beta\vert\\ 
&=& \sum\limits_{n=0}^{+\infty}\frac{1}{2\Gamma(n+\beta +1)}\left(  \int_0^{+\infty}r^{n}h(r)dr \right)  \vert \varphi_n\rangle^\beta \langle \varphi_n^\beta\vert.\label{4.14}
\end{eqnarray}
Now, to determinate $h$ such that
\begin{equation}
\label{Meijer}
\int_0^{+\infty}r^{n}h(r)dr=2\Gamma(n+\beta +1),
\end{equation}
we recall the Euler gamma function
\begin{equation}
\int_0^{+\infty}t^{s}e^{-t}dt=\Gamma(s+1), \quad Re(s)>-1.
\end{equation}
This suggests us to take the weight function $h(r)=2 r^{\beta}e^{-r}$. This leads to the measure in $(\ref{measure-RI})$. With this measure, Eq.$(\ref{4.14})$ reduces to $\mathcal{O}_{\beta}=\sum_{n=0}^{+\infty}\vert \varphi_n^\beta\rangle \langle \varphi_n^\beta\vert=\textbf{1}_{\mathcal{H}},$ since $\{\varphi_{n}^\beta\}$ is an orthonormal basis of $L^2\left(\mathbb{R},\; d\omega_{\beta}(x)\right)$. $\Box$

\end{appendix}

%$[?]$ Andrei D. Polyanin, Valentin F. Zaitsev, Handbook of exact solutions for ordinary differential equations, 2nd ed., A CRC Press Company
%Boca Raton London New York Washington, D.C.

\end{document}